\newcommand{\etal}{{et~al.}}
\newcommand{\DT}[1]{DT_1(#1)}
\newcommand{\DelT}[1]{DT(#1)}
\newcommand{\UDG}[1]{U\!DG(#1)}
\newcommand{\ray}[1]{\overrightarrow{#1}}
\newcommand{\pth}[2]{$(#1,#2)$-path}
\newcommand{\walk}[2]{$(#1,#2)$-walk}
\newcommand{\ratio}{341}
\title{Plane Hop Spanners for Unit Disk Graphs\\ \Large Simpler and Better
}
\author{Ahmad Biniaz\thanks{Supported by NSERC Postdoctoral Fellowship}
}
\affil{Cheriton School of Computer Science\\University of Waterloo\\ \texttt{ahmad.biniaz@gmail.com}}
\date{}
\newtheorem{lemma}{Lemma}
\newtheorem{corollary}{Corollary}
\newtheorem{theorem}{Theorem}
\newtheorem*{problem*}{Problem}
\newtheorem*{claim*}{Claim}
\newtheorem*{invariant*}{Invariant}
\begin{document}
	\maketitle
	\begin{abstract}
	The unit disk graph (UDG) is a widely employed model for the study of wireless networks. In this model, wireless nodes are represented by points in the plane and there is an edge between two points if and only if their Euclidean distance is at most one. A {\em hop spanner} for the UDG is a spanning subgraph $H$ such that for every edge $(p,q)$ in the UDG the topological shortest path between $p$ and $q$ in $H$ has a constant number of edges. The {\em hop stretch factor} of $H$ is the maximum number of edges of these paths. A hop spanner is {\em plane} (i.e. embedded planar) if its edges do not cross each other.
	
	The problem of constructing hop spanners for the UDG has received considerable attention in both computational geometry and wireless ad hoc networks. Despite this attention, there has
	not been significant progress on getting hop spanners that (i) are plane, and (ii) have low hop stretch factor. Previous constructions either do not ensure the planarity or have high hop stretch factor. The only construction that satisfies both conditions is due to Catusse, Chepoi, and Vax{\`{e}}s (2010); their plane hop spanner has hop stretch factor at most 449. 
	
	Our main result is a simple algorithm that constructs a plane hop spanner for the UDG. In addition
	to the simplicity, the hop stretch factor of the constructed spanner is at most $\ratio$. Even though the algorithm itself is simple, its analysis is rather involved. Several results on the plane geometry are established in the course of the proof. These results are of independent interest.  
	\end{abstract}
\section{Introduction}
Computational geometry techniques are widely used to solve problems, such as topology construction, routing, and broadcasting, in wireless ad hoc networks. A wireless ad hoc network is usually modeled as a unit disk graph (UDG). In this model wireless devices are represented by points in the plane and assumed to have identical unit transmission radii. There exists an edge between two points if their Euclidean distance is at most one unit; this edge indicates that the corresponding devices are in each other's transmission range and can communicate.

A {\em geometric graph} is a graph whose vertices are points in the plane and whose edges are
straight-line segments between the points. 
A geometric graph is {\em plane} if its edges do not cross each other. Let $G$ be a geometric graph. 
A {\em topological shortest path} between any two vertices $u$ and $v$ in $G$ is a path that connects $u$ and $v$ and has the minimum number of edges. The {\em hop distance} $h_G(u,v)$ between $u$ and $v$ is the number of edges of a topological shortest path between them. 

For a point set $P$ in the plane, the {\em unit disk graph} $\UDG{P}$ is a geometric graph with vertex set $P$ that has an edge between two points $p$ and $q$ if and only if their Euclidean distance $|pq|$ is at most $1$. 
A {\em hop spanner} for $\UDG{P}$ is a spanning subgraph $H$ such that for any edge $(p,q)\in\UDG{P}$ it holds that $h_H(p,q)\leqslant t$, where $t$ is some positive constant. The constant $t$ is called the {\em hop stretch factor} of $H$. In this paper we study the problem of constructing UDG hop spanners
that are plane and have low hop stretch factor.  

The {\em Euclidean spanner} and {\em Euclidean stretch factor} are defined in a similar way, but for the distance measure they use the total Euclidean length of path edges. 
Both hop spanners and Euclidean spanners have received considerable attention in computational geometry and wireless ad hoc networks; see e.g. the surveys by Eppstein \cite{Eppstein1996}, Bose and Smid \cite{Bose2013}, Li \cite{Li2003}, and the book by Narasimhan and Smid \cite{Narasimhan2007}. 
Unit disk graph spanners have been used to reduce the size of a network and the amount of routing information. They are also used in topology control for maintaining network connectivity, improving throughput, and optimizing network lifetime; see the surveys by Li \cite{Li2003} and Rajaraman \cite{Rajaraman2002}. Constructions of UDG spanners, both centralized and
distributed, also with additional properties like planarity and power saving have been widely studied \cite{Gao2005,Li2002,Li2004,Kanj2008}. Researchers also studied the construction of spanners for general disk graphs \cite{Furer2012} and for quasi unit disk graphs \cite{Chen2011}.

\subsection{Related Work}
In this section we review some previous attempts towards getting plane hop spanners for unit disk graphs.
Gao~\etal~\cite{Gao2005} proposed a randomized algorithm that constructs a spanner with constant Euclidean and hop stretch factors. They use a hierarchical clustering algorithm of \cite{Gao2003} to create several clusters of points each containing a point as the clusterhead. Then they connect the clusters by a restricted Delaunay graph
, and then connect the remaining points to clusterheads. The restricted Delaunay graph can be maintained
in a distributed manner when points move
around. Although the underlying restricted Delaunay graph is plane, the entire spanner is not. This spanner has constant Euclidean stretch factor in expectation, and 
constant hop stretch factor for some unspecified constant.

Alzoubi~\etal~\cite{Alzoubi2003} proposed a distributed algorithm for the construction of a hop spanner for the UDG. Their algorithm integrates the connected dominating set and the local Delaunay graph of \cite{Li2002} to form a backbone for the spanner. Although the backbone is plane, the entire spanner is not. The hop stretch factor of this spanner is at most $15716$ (around $15000$ as estimated in \cite{Catusse2010}).

To the best of our knowledge, the only construction that guarantees the planarity of the entire hop spanner is due to Catusse, Chepoi, and Vax{\`{e}}s \cite{Catusse2010}. First they use a regular square-grid to partition input points into clusters. Then they add edges between points in different clusters, and also between points in the same cluster to obtain a hop spanner, which is not necessarily plane. Then they go through several steps and in each step they remove some edges to ensure planarity, and add some new edges to maintain constant hop stretch factor. At the end they obtain a plane hop spanner with hop stretch factor at most $449$. This spanner can be obtained by a localized distributed algorithm. 

\subsection{Our Contribution}

Our main contribution in this paper is a polynomial-time simple algorithm that constructs a plane hop spanner, with hop stretch factor at most $\ratio$, for unit disk graphs. Our algorithm works as follows: Given a set $P$ of points in the plane, we first select a subset $S$ of $P$ (in a clever way), then compute a plane graph $\DT{S}$ (which is the Delaunay triangulation of $S$ minus edges of length more than 1), and then connect every remaining point of $P$ to its closest visible vertex of $\DT{S}$. 
In addition to improving the hop stretch factor, this algorithm is straightforward and the planarity proof is simple, in contrast to that of Catusse \etal~\cite{Catusse2010}. Our analysis of hop stretch factor is still rather involved. Towards the correctness proof of our algorithm, we prove several results on the plane geometry, which are of independent interest. Our construction uses only local information and can be implemented
as a localized distributed algorithm.



Catusse \etal~\cite{Catusse2010} also showed a simple construction of a hop spanner, with hop stretch factor $5$ and with at most $10n$ edges, for any $n$-vertex unit disk graph. With a simple modification to their construction we obtain such a spanner with at most $9n$ edges.

\section{Preliminaries and Some Geometric Results}

We say that a set of points in the plane is in {\em general position} if no three points lie on a straight line and no four points lie on a circle. Throughout this paper, every given point set is assumed to be in general position. For a set $P$ of points in the plane, we denote by $\DelT{P}$ the Delaunay triangulation of $P$. Let $p$ and $q$ be any two points in the plane. We denote by $pq$ the straight-line segment between $p$ and $q$, and by $\ray{pq}$ the ray that emanates from $p$ and passes through $q$. The {\em diametral disk} $D(p,q)$ between $p$ and $q$ is the disk with diameter $|pq|$ that has $p$ and $q$ on its boundary. Every disk considered in this paper is closed, i.e., the disk contains its boundary circle.

Consider the Delaunay triangulation of a point set $P$. In their seminal work, Dobkin, Friedman, and Supowit \cite{Dobkin1990} proved that for any two points $p,q\in P$ there exists a path, between $p$ and $q$ in $\DelT{P}$, that lies in the diametral disk between $p$ and $q$. Their proof makes use of Voronoi cells (of the Voronoi diagram of $P$) that intersect the line segment $pq$. In the following theorem we give a simple inductive proof for a more general claim that shows the existence of such a path in any disk (not only the diametral disk) between $p$ and $q$.

\begin{figure}[htb]
	\centering
	\includegraphics[width=.27\columnwidth]{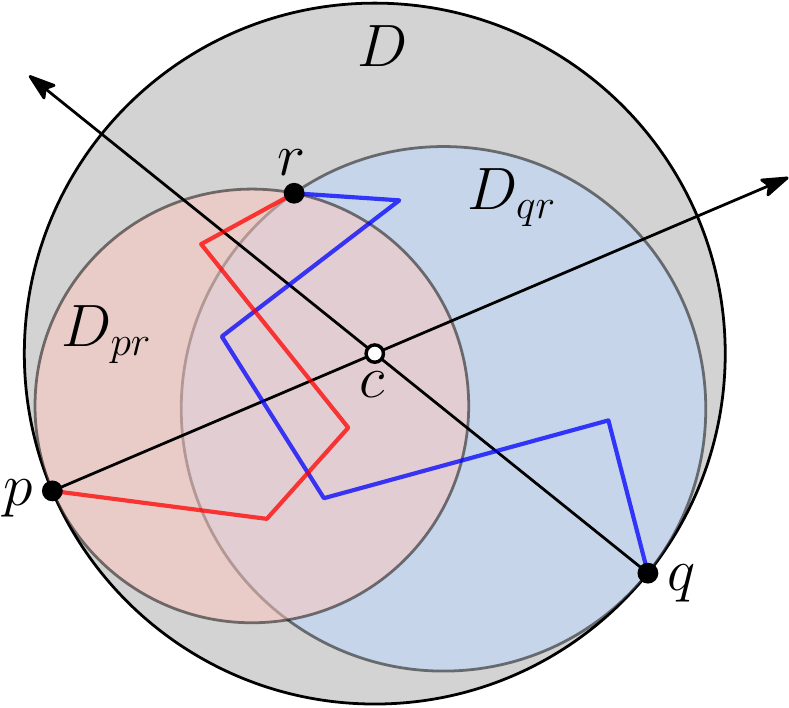}
	\caption{Illustration of the proof of Theorem~\ref{Delaunay-thr}.}
	\label{Delaunay-fig}
\end{figure}

\begin{theorem}
	\label{Delaunay-thr}
	Let $P$ be a set of points in the plane in general position and let $\DelT{P}$ be the Delaunay triangulation of $P$. Let $p$ and $q$ be any two points of $P$ and let $D$ be any disk that has $p$ and $q$ on its boundary. There exists a path, between $p$ and $q$ in $\DelT{P}$, that lies in $D$.
\end{theorem}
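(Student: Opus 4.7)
The plan is to prove the theorem by induction on $|P \cap D|$, the number of points of $P$ lying in the closed disk $D$.

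For the base case $|P \cap D|=2$, the disk $D$ contains no $P$-point other than $p$ and $q$, so $D$ itself is an empty disk with $p,q$ on its boundary and thus witnesses that $pq$ is a Delaunay edge; the single-edge path $pq$ lies in $D$ by convexity.

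For the inductive step with $|P \cap D|\geq 3$, I would first handle the degenerate case where the open interior of $D$ is free of $P$-points: by general position (no four cocircular) there must then be exactly one extra point $r\in P\cap\partial D$, so $D$ is the empty circumdisk of $\triangle pqr$, which is therefore a Delaunay triangle, and the two-edge path $p\to r\to q$ lies in $D$ by convexity. Otherwise I would pick any $r\in P$ in the interior of $D$ and construct two smaller disks: $D_1$, the unique disk internally tangent to $\partial D$ at $p$ passing through $r$, and symmetrically $D_2$, internally tangent to $\partial D$ at $q$ and passing through $r$. Applying the inductive hypothesis to the pairs $(p,r)$ with disk $D_1$ and $(r,q)$ with disk $D_2$ would yield paths in $\DelT{P}$ inside $D_1$ and $D_2$, whose concatenation is the desired path from $p$ to $q$ lying in $D$.

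The main technical step will be verifying that this tangent construction has the properties the induction needs. I would argue it in two parts. First, $D_1$ exists and is contained in $D$: the one-parameter family of disks tangent to $\partial D$ internally at $p$ sweeps from a single point out to $D$ itself, and exactly one member of this family passes through the given interior point $r$. Second, $q \notin D_1$: because $D_1$ and $D$ are internally tangent at $p$, their boundary circles meet only at $p$, so $\partial D\setminus\{p\}$ lies entirely in the exterior of $D_1$; in particular $q$ is strictly outside $D_1$. Combined with $p,r\in D_1$ and $p,q,r\in D$, this gives $|P\cap D_1|\leq|P\cap D|-1$, the strict decrease needed for the induction, and the symmetric argument handles $D_2$. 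I expect this tangent-disk geometry to be the main obstacle; once it is in place, the rest is straightforward bookkeeping.
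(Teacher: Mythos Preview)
Your proposal is correct and follows essentially the same approach as the paper. Your ``internally tangent at $p$'' disk $D_1$ is precisely the paper's disk obtained by fixing $D$ at $p$ and shrinking it along the ray $\ray{pc}$ (the centers of the tangent pencil lie on the radius through $p$), so the two constructions coincide; the paper simply phrases it as a shrinking rather than a tangency. Your separate boundary sub-case is unnecessary---when the interior of $D$ is empty, $(p,q)$ is already a Delaunay edge even if a third point sits on $\partial D$---and the concatenation of the two inductive paths is a priori only a walk, but it of course contains a simple $(p,q)$-path in $D$.
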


\begin{proof}
We prove this lemma by induction on the number of points in $D$. If $D$ does not contain any point of $P\setminus\{p,q\}$ in its interior then $(p,q)$ is an edge of $\DelT{P}$, and thus $(p,q)$ is a desired path. Assume that $D$ contains a point $r\in P\setminus\{p,q\}$ in its interior. Let $c$ be the center of $D$. Consider the ray $\ray{pc}$. Fix $D$ at $p$ and shrink it along $\ray{pc}$ until $r$ becomes on its boundary circle; see Figure \ref{Delaunay-fig}. Denote the resulting disks by $D_{pr}$; this disk lies fully in $D$. Compute the disk $D_{qr}$ in a similar fashion by shrinking $D$ along $\ray{qc}$. Since $r$ is in the interior of $D$, the disk $D_{pr}$ does not contain $q$ and the disk $D_{qr}$ does not contain $p$. Thus, the number of points in each of $D_{pr}$ and $D_{qr}$ is smaller than that of $D$. Therefore, by induction hypothesis there exists a path, between $p$ and $r$ in $\DelT{P}$, that lies in $D_{pr}$, and similarly there exists a path, between $q$ and $r$ in $\DelT{P}$, that lies in $D_{qr}$. The union of these two paths contains a path, between $p$ and $q$ in $\DelT{P}$, that lies in $D$.
\end{proof}

Let $G$ be a plane geometric graph and let $p\notin G$ be any point in the plane. We say that a vertex $q\in G$ is {\em visible} from $p$ if the straight-line segment $pq$ does not cross any edge of $G$. One can simply verify that for every $p$ such a vertex $q$ exists. Among all vertices of $G$ that are visible from $p$, we refer to the one that is closest to $p$ by the {\em closest visible vertex} of $G$ from $p$.

The following theorem (though simple) turns out to be crucial in the planarity proof of our hop spanner; this theorem is of independent interest. Although it answers a basic question, we were unable to find such a result in the literature; there exist however related results, see e.g. \cite{Akitaya2015,Bose2012,Hurtado2008}.

\begin{figure}[htb]
	\centering
	\includegraphics[width=.28\columnwidth]{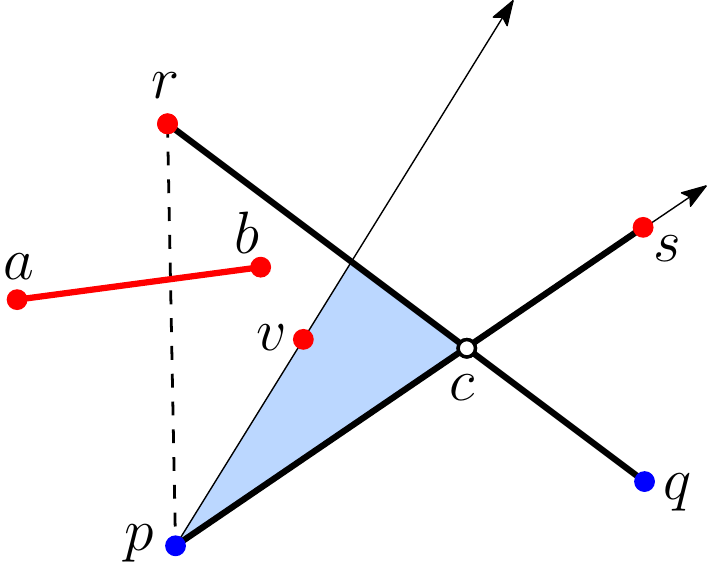}
	\caption{Illustration of the proof of Theorem~\ref{pslg-thr}. The red vertices belong to $G$.}
	\label{pslg-fig}
\end{figure}

\begin{theorem}
	\label{pslg-thr}
	Let $G$ be a plane geometric graph, and let $Q$ be a set of points in the plane that is disjoint from $G$. The graph, that is obtained by connecting every point of $Q$ to its closest visible vertex of $G$, is plane. 
\end{theorem}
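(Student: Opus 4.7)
Since each added edge $qv(q)$ joins $q$ to a visible vertex of $G$, it cannot cross an edge of $G$; hence the only way the new graph can fail to be plane is that two added edges cross. Suppose, for contradiction, that segments $q_1v_1$ and $q_2v_2$ meet at a point $x$, where $v_i$ is the closest visible vertex of $G$ from $q_i$. The strict triangle inequality at the crossing yields $|q_1v_2|+|q_2v_1|<|q_1v_1|+|q_2v_2|$, and after possibly swapping the two pairs I may assume $|q_1v_2|<|q_1v_1|$. It therefore suffices to exhibit a vertex of $G$ that is visible from $q_1$ and strictly closer to $q_1$ than $v_1$, since this contradicts the choice of $v_1$.

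Let $T$ be the triangle with vertices $q_1$, $x$, and $v_2$. Its two sides $q_1x$ and $xv_2$ lie inside the visible segments $q_1v_1$ and $q_2v_2$ respectively, so neither is crossed by any edge of $G$; moreover, no vertex of $G$ lies in the relative interior of either side, because such a vertex would be visible from $q_1$ or $q_2$ at a distance smaller than $|q_1v_1|$ or $|q_2v_2|$. I sweep across $T$ by setting $p(t)=(1-t)x+tv_2$ for $t\in[0,1]$ and considering the one-parameter family of segments $q_1p(t)$: at $t=0$ this is the uncrossed segment $q_1x$, and at $t=1$ it is $q_1v_2$.

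For any fixed edge $e$ of $G$, the property that $q_1p(t)$ properly crosses $e$ is open in $t$, and it can begin or end only when $q_1p(t)$ passes through an endpoint of $e$, that is, through a vertex of $G$. If $q_1p(t)$ has no proper crossing for every $t\in[0,1]$, then $q_1v_2$ itself does not cross $G$, so $v_2$ is visible from $q_1$ and together with $|q_1v_2|<|q_1v_1|$ this already gives the contradiction. Otherwise let $t^*\in(0,1)$ be the infimum of those $t$ for which $q_1p(t)$ has a proper crossing; by the preceding observation and general position, $q_1p(t^*)$ meets some vertex $w$ of $G$. A continuity argument then shows $w$ is visible from $q_1$: any proper crossing of the subsegment $q_1w$ with an edge of $G$ would survive a small perturbation of the sweep parameter and produce a crossing for some $t<t^*$, contradicting the choice of $t^*$.

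Finally, since $t^*<1$, the point $p(t^*)$ lies in the open side $xv_2$, which contains no vertex of $G$; together with $q_1\notin G$ this forces $w$ to lie strictly between $q_1$ and $p(t^*)$ on $q_1p(t^*)$, and therefore in the open interior of $T$. Because the Euclidean distance from $q_1$ is a convex function on $T$ whose maximum is attained at a vertex of $T$, we get $|q_1w|<\max(|q_1x|,|q_1v_2|)$, and this maximum is strictly less than $|q_1v_1|$ because $x$ lies strictly between $q_1$ and $v_1$ and $|q_1v_2|<|q_1v_1|$ by assumption. Hence $w$ is a vertex of $G$ visible from $q_1$ and strictly closer than $v_1$, the desired contradiction. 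The main obstacle to make this rigorous is the sweep claim, namely that the crossing pattern of $q_1p(t)$ with $G$ changes only at vertex-passage events and that visibility is preserved in the limit at $t=t^*$.
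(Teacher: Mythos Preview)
Your proof is correct and follows essentially the same approach as the paper's: both argue by contradiction, use the triangle inequality at the crossing point to obtain (after relabeling) $|q_1v_2|<|q_1v_1|$, and then sweep/rotate a ray from $q_1$ across the triangle $\triangle q_1xv_2$ to locate a visible vertex of $G$ inside that triangle, which is necessarily closer to $q_1$ than $v_1$. The only cosmetic difference is that the paper first pins down a blocking edge $(a,b)$ with an endpoint $b$ in the triangle and then rotates $\overrightarrow{q_1v_1}$ toward $b$, whereas you parameterize the sweep directly and extract the vertex $w$ from the first crossing event; the geometric content is identical.
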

\begin{proof}
Let $E$ be the set of edges that connect every point of $Q$ to its closest visible vertex of $G$. To prove the theorem, it suffices to show that the edges of $G\cup E$ do not cross each other. The edges of $G$ do not cross each other because $G$ is plane. It is implied from the definition of visibility that the edges of $E$ do not cross the edges of $G$. 

It remains to prove that the edges of $E$ do not cross each other. We prove this by contradiction. To that end consider two crossing edges $(p,s)$ and $(q,r)$ of $E$ where $p,q$ are two points of $Q$ and $s,r$ are two vertices of $G$. Let $c$ be their intersection point of $(p,s)$ and $(q,r)$. By the triangle inequality we have $|pr|<|ps|$ or $|qs|<|qr|$. After a suitable relabeling assume that $|pr|<|ps|$, and thus $p$ is closer to $r$ than to $s$. The reason that $p$ was not connected to $r$, is that $r$ is not visible from $p$. Therefore there are edges of $G$ that block the visibility of $r$ from $p$. Take any such edge $(a,b)$. The edge $(a,b)$ does not intersect any of $(p,s)$ and $(q,r)$ because otherwise $(a,b)$ blocks the visibility of $s$ from $p$ or the visibility of $r$ from $q$, and as such we wouldn't have these edges in $E$; see Figure~\ref{pslg-fig}. Therefore, exactly one endpoint of $(a,b)$, say $b$, lies in the triangle $\bigtriangleup pcr$. Rotate the ray $\ray{ps}$ towards $b$ and stop as soon as hitting a vertex of $G$ in $\bigtriangleup pcr$. This vertex is visible from $p$. Denote this vertex by $v$ (it might be that $v=b$). Since $v$ lies in $\bigtriangleup pcr$, it turns out that $|pv|\leqslant\max\{|pr|,|pc|\}<|ps|$. Thus, $v$ is a closer visible vertex of $G$ from $p$. This contradicts the fact that $s$ is a closest visible vertex from $p$. 
\end{proof}

\begin{lemma}
	\label{Convex-lemma}
	Let $C$ be a convex shape of diameter $d$ in the plane, and let $pq$ be a straight-line segment that intersects $C$. Then the distance from any point $r\in C$ to $p$ or to $q$ is at most $\sqrt{d^2+|pq|^2/4}$.
\end{lemma}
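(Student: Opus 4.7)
I would reduce the two-endpoint bound to a one-endpoint bound by relabeling so that $|rp|\le|rq|$; it then suffices to prove $|rp|\le\sqrt{d^2+|pq|^2/4}$. Place coordinates with the midpoint $m$ of $pq$ at the origin and $pq$ along the $x$-axis, so $p=(-L/2,0)$, $q=(L/2,0)$ where $L=|pq|$, and $r=(x,y)$. The assumption $|rp|\le|rq|$ translates exactly to $x\le 0$, which is the sign condition I intend to exploit.

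A direct expansion gives $|rp|^2=|rm|^2+L^2/4+xL$, so because $xL\le 0$ it suffices to control $|rm|^2$ appropriately. The obvious hope is $|rm|\le d$, but $m$ need not lie in $C$, so I instead use any witness point $s=(s_1,0)\in pq\cap C$, which exists by hypothesis and satisfies $|s_1|\le L/2$. Since both $r$ and $s$ lie in the convex set $C$ of diameter $d$, we get $|rs|\le d$, and unwinding this gives $|rm|^2\le d^2+2xs_1-s_1^2$.

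Substituting this bound into the expansion of $|rp|^2$ turns the target inequality $|rp|^2\le d^2+L^2/4$ into the equivalent claim $x(2s_1+L)\le s_1^2$. This holds immediately from the two sign facts already in hand: $x\le 0$ because $p$ is the closer endpoint, and $2s_1+L\ge 0$ because $s\in pq$ forces $s_1\ge -L/2$. Thus the left-hand side is nonpositive while the right-hand side is nonnegative, completing the argument.

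\textbf{Expected obstacle.} The only slightly subtle step is resisting the naive attempt $|rm|\le d$: in general the midpoint $m$ is not in $C$, so one must work with an arbitrary witness $s\in pq\cap C$ and then notice that the remainder it produces is controlled precisely by the two sign conditions (closer endpoint; $s$ on segment). Once this is spotted, the rest is a one-line expansion.
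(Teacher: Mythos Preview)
Your proof is correct and uses essentially the same idea as the paper: pick a witness $s\in pq\cap C$, use $|rs|\le d$, and bound the distance from $r$ to the closer endpoint of $pq$. The paper phrases this synthetically via the foot $c$ of the perpendicular from $r$ to the line through $pq$ (with a small case split on whether an endpoint of $pq$ lies on the segment $sc$), whereas your coordinate expansion and sign argument $x(2s_1+L)\le 0\le s_1^2$ absorb that case split uniformly.
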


\begin{proof}
Let $s$ be a point in the intersection of $C$ and $pq$. Let $l_{pq}$ be the line through $pq$, and let $c$ be the point of $l_{pq}$ that is closest to $r$. Observe that $\bigtriangleup rsc$ is a right triangle with hypotenuse $rs$, and thus $|rc|\leqslant |rs|\leqslant d$. If an endpoint of $pq$ lies on segment $sc$ (as depicted in Figure~\ref{convex-fig}) then the distance from $r$ to that endpoint is at most $d$. Assume that no endpoint of $pq$ lies on $sc$, and thus $c$ lies on $pq$. After a suitable relabeling assume that $c$ is closer to $p$ than to $q$, and thus $|cp|\leqslant |pq|/2$. In this setting, by the Pythagorean equation we get $|rp|=\sqrt{|rc|^2+|cp|^2}\leqslant \sqrt{d^2+|pq|^2/4}$.
\end{proof}

\begin{figure}[htb]
	\centering
	\includegraphics[width=.31\columnwidth]{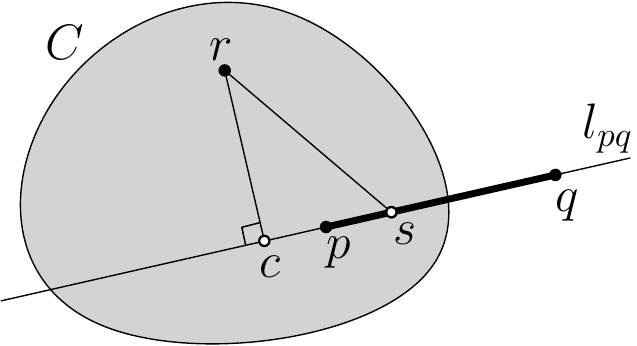}
	\caption{Illustration of the proof of Lemma~\ref{Convex-lemma}}
	\label{convex-fig}
\end{figure}

We refer to a hop spanner with hop stretch factor $t$ as a $t${\em-hop spanner}.
Catusse \etal~\cite{Catusse2010} showed a simple construction of a sparse 5-hop spanner with at most $10n$ edges, for any $n$-vertex unit disk graph. With a simple modification to their construction we obtain a 5-hop spanner with at most $9n$ edges. 

\begin{figure}[htb]
	\centering
	\includegraphics[width=.23\columnwidth]{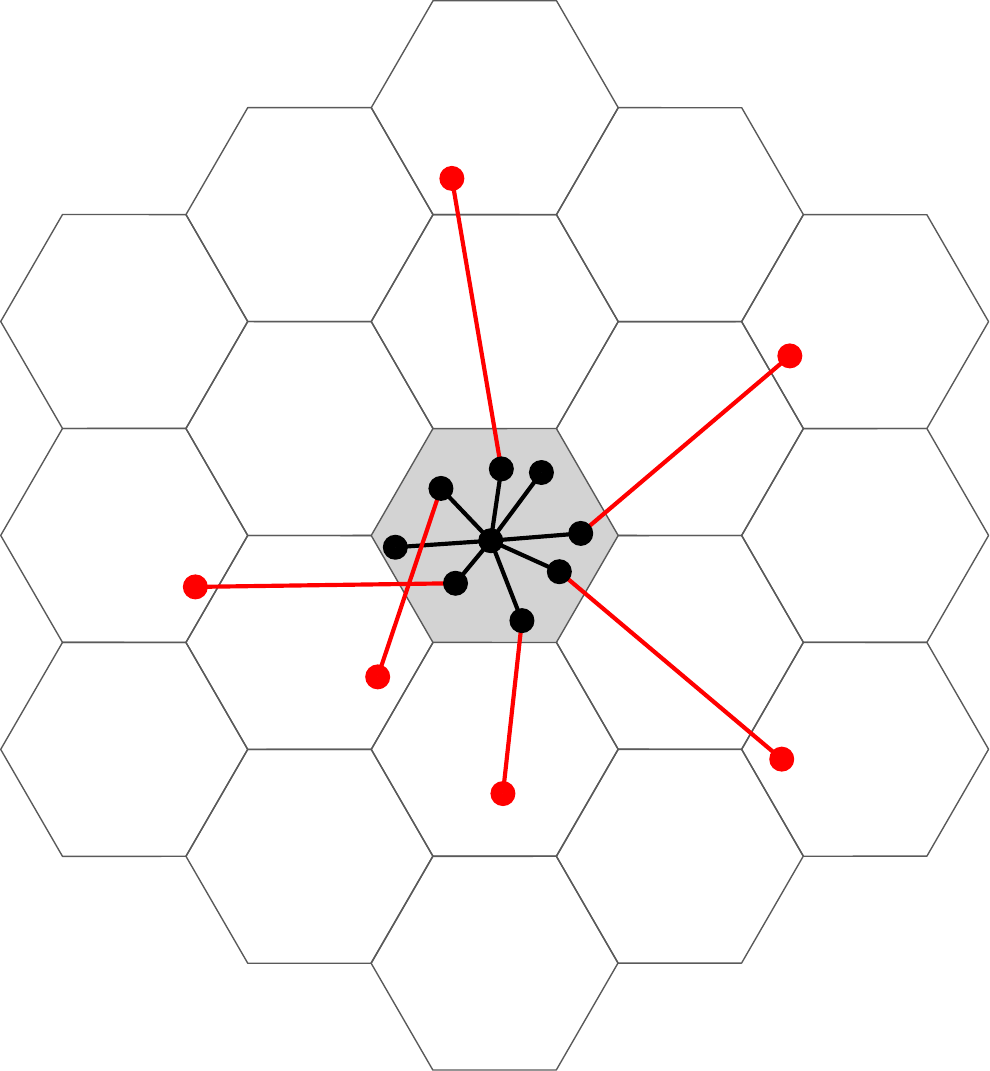}
	\caption{Every cell can have edges to at most $18$ other cells.}
	\label{sparse-spanner-fig}
\end{figure}

\begin{theorem}
	\label{sparse-spanner-thr}
	Every $n$-vertex unit disk graph, has a $5$-hop spanner with at most $9n$ edges.
\end{theorem}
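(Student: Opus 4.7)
The plan is to tile the plane with regular hexagons of diameter $1$ (side length $1/2$), designate one \emph{representative} per non-empty cell, and build $H$ from two simple families of edges: (i) a ``spoke'' from every non-representative point to its cell's representative, and (ii) for every unordered pair of distinct non-empty cells that contains at least one UDG edge, one such edge. Both families consist of UDG edges: the spokes have length at most the cell diameter $1$, and the inter-cell edges are chosen from the UDG by construction.

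I would first verify the hop bound. For any UDG edge $(p,q)$ with $p \in C_p$ and $q \in C_q$, if $C_p = C_q$ then $p \to \mathrm{rep}(C_p) \to q$ is a walk of at most $2$ edges. Otherwise the pair $(C_p, C_q)$ is UDG-adjacent, so $H$ contains some inter-cell edge $(p', q')$ with $p' \in C_p,\ q' \in C_q$, and the walk $p \to \mathrm{rep}(C_p) \to p' \to q' \to \mathrm{rep}(C_q) \to q$ uses at most $5$ edges (fewer when an endpoint coincides with a representative).

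The main obstacle is the edge count, which reduces to the \emph{geometric claim} that each cell in this tiling is UDG-adjacent to at most $18$ other cells. Granting this, and letting $n_c$ denote the number of non-empty cells, family (i) contributes $n - n_c$ edges while family (ii) contributes at most $\tfrac{1}{2}\cdot 18 n_c = 9 n_c$ edges, for a total of at most $n + 8 n_c \leqslant 9n$ since $n_c \leqslant n$. To establish the geometric claim, two points in distinct cells are within Euclidean distance $1$ only if the minimum cell-to-cell distance is at most $1$. Using the hexagon's apothem $\sqrt{3}/4$ and width $\sqrt{3}/2$, a direct case analysis would show that the $6$ cells at hex-graph-distance $1$ and the $12$ cells at hex-graph-distance $2$ from $C$ all lie within minimum distance $\sqrt{3}/2 < 1$, whereas every cell at hex-graph-distance $\geqslant 3$ lies at minimum distance at least $\sqrt{7}/2 > 1$. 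This yields the bound $6 + 12 = 18$ illustrated in Figure~\ref{sparse-spanner-fig}.
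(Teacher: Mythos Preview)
Your proof is correct and follows essentially the same approach as the paper: a hex-grid of diameter-$1$ cells, a representative per cell, star edges inside each cell, and one UDG edge per adjacent cell pair, with the key geometric fact that each cell has at most $18$ UDG-adjacent cells. If anything, your write-up is more explicit than the paper's: you spell out the edge count $(n-n_c)+9n_c\leqslant 9n$ where the paper simply says ``by a counting argument,'' and you give the $\sqrt{7}/2$ distance bound for the third ring where the paper defers to Figure~\ref{sparse-spanner-fig}.
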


\begin{proof}
Consider the unit disk graph $\UDG{P}$ on any set $P$ of $n$ points in the plane. 
Consider a regular hex-grid on the plane with hexagons (cells) of diameter 1.
In every nonempty cell $\pi$ pick a point as the {\em center} and connect it to all other points in this cell; these edges are in $\UDG{P}$ because the diameter of $\pi$ is $1$. Then take exactly one edge of $\UDG{P}$ between any two cells if such an edge exists; from each  cell we can have edges to at most $18$ other cells as depicted in the Figure~\ref{sparse-spanner-fig} (Catusse~\etal~\cite{Catusse2010} use a square-grid in which every cell can have edges to at most $20$ other cells). We claim that the resulting graph, which we call it $H$, is a desired spanner. By a counting argument one can verify that $H$ has at most $9n$ edges. To verify the hop stretch factor consider two points $p,q\in P$. If $p$ and $q$ are in the same cell, then there is a path, of length at most $2$ between $p$ and $q$ in $H$, that goes through the center of the cell. Assume that $p$ and $q$ lie in different cells, say $\pi(p)$ and $\pi(q)$. By our construction there is an edge, say $(p',q')$, in $H$ between $\pi(p)$ and $\pi(q)$. Therefore, there is a path of length at most $5$ between $p$ and $q$, that goes through $p',q'$, and through the centers of $\pi(p)$ and $\pi(q)$. 
\end{proof}

\section{Plane Hop Spanner Algorithm}
This section presents our main contribution which is a polynomial-time algorithm for construction of plane hop spanners for unit disk graphs.

Let $P$ be a set of points in the plane in general position, and let $\UDG{P}$ be the unit disk graph of $P$. 
Our algorithm first partitions $P$ into some clusters by using a regular square-grid; this is a standard initial step in many UDG algorithms, see e.g. \cite{Catusse2010,Chen2011}. We use this partition to select a subset $S$ of $P$ that satisfies some properties, which we will describe later. Then we compute the Delaunay triangulation of $S$ and remove every edge that has length more than $1$. We denote the resulting graph by $\DT{S}$. Then we connect every point of $P\setminus S$ to its closest visible vertex of $\DT{S}$. Let $H$ denote the final resulting graph. We claim that $H$ is a plane hop spanner, with hop stretch factor at most $\ratio$, for $\UDG{P}$. In Section~\ref{S-section} we show how to compute $S$. 
The points of $S$ are distributed with constant density, i.e., there are $O(1)$ points of $S$ in any unit disk in the plane. Based on this and the fact that $\DT{S}$ has only edges of length at most 1, $\DT{S}$ can be computed by a localized distributed algorithm.
In Section~\ref{correctness-section} we prove the correctness of the algorithm that $H$ is plane and $H$ is a subgraph of $\UDG{P}$. In Section~\ref{stretch-section} we analyze the stretch factor of $H$. The following theorem summarizes our result in this section.

\begin{theorem}
	There exists a plane $\ratio$-hop spanner for the unit disk graph of any set of points in the plane in general position. Such a spanner can be computed in polynomial time. 
\end{theorem}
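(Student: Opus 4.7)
The plan is to verify the four conclusions claimed for the output graph $H$: that $H\subseteq\UDG{P}$, that $H$ is plane, that $H$ has hop stretch factor at most $\ratio$, and that the whole construction runs in polynomial time. The last point is immediate: the grid partition, the selection of $S$ described in Section~\ref{S-section}, the computation of $\DelT{S}$ (and the pruning of long edges to obtain $\DT{S}$), and the assignment of each point in $P\setminus S$ to its closest visible vertex can each be carried out in polynomial time.

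Planarity falls out of Theorem~\ref{pslg-thr}. The graph $\DT{S}$ is a subgraph of $\DelT{S}$ and hence plane, and the only additional edges of $H$ are those that connect each point of $P\setminus S$ to its closest visible vertex in $\DT{S}$. That is exactly the configuration handled by Theorem~\ref{pslg-thr}, so $H$ is plane. For $H\subseteq\UDG{P}$, edges of $\DT{S}$ have length at most $1$ by construction, so only the edges from $P\setminus S$ to their closest visible vertex need attention; here I would appeal to the density property of $S$ proved in Section~\ref{S-section}, namely that $S$ is chosen so that from every $u\in P\setminus S$ there is at least one vertex of $\DT{S}$ visible from $u$ at Euclidean distance at most $1$, which forces the assigned closest visible vertex to also be within distance $1$.

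The hop stretch factor is the crux. Given an edge $(p,q)\in\UDG{P}$, the plan is to produce a path of at most $\ratio$ edges between $p$ and $q$ in $H$. First I would reduce to the case $p,q\in S$ by replacing each endpoint that lies in $P\setminus S$ by its closest visible vertex in $\DT{S}$, at a cost of one hop per replacement. The resulting pair of $S$-vertices $p',q'$ sits in a disk $D$ of controlled radius, since $|pp'|,|qq'|\leq 1$ and $|pq|\leq 1$. Theorem~\ref{Delaunay-thr} then yields a path in $\DelT{S}$ between $p'$ and $q'$ lying entirely in $D$; I would argue, combining Lemma~\ref{Convex-lemma} with the way $S$ was constructed, that every edge of this path has length at most $1$, so the path actually lies inside $\DT{S}$ and therefore in $H$. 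To bound its length I would use that the density of $S$ guarantees only $O(1)$ points of $S$ inside $D$, so the Delaunay path inside $D$ visits at most a constant number of vertices; this constant, together with the at most two connector hops on either end, is what the numerical value $\ratio$ tracks.

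The main obstacle will be the stretch-factor analysis, since Theorem~\ref{Delaunay-thr} only promises containment of a Delaunay path inside a disk but says nothing about its number of edges, while the constant $\ratio$ is sensitive to both the radius of the disk $D$ and the precise separation/density parameters of $S$. Pinning these two ingredients down simultaneously---so that $D$ is small enough to contain only a bounded number of $S$-vertices yet large enough that every intermediate Delaunay edge still has length at most $1$---is the delicate step that will drive the final constant.
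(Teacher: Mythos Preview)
Your treatment of polynomial time, planarity (via Theorem~\ref{pslg-thr}), and $H\subseteq\UDG{P}$ is fine and matches the paper; the paper's Lemma~\ref{unselected-length-lemma} actually gives the sharper bound $|uu_1|\leqslant 1/\sqrt{2}$ for the connector edges, and this stronger bound is used later.

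The stretch-factor argument, however, has a real gap. After you replace $p,q$ by their closest visible vertices $p',q'\in S$, you only know $|pp'|,|qq'|\leqslant 1/\sqrt{2}$ and $|pq|\leqslant 1$, so $|p'q'|$ can be as large as $1+\sqrt{2}$. Theorem~\ref{Delaunay-thr} does give a Delaunay path between $p'$ and $q'$ inside a disk through them, but the edges of that path can have length up to the disk's diameter, hence well above $1$; such edges are \emph{not} in $\DT{S}$, and Lemma~\ref{Convex-lemma} cannot force them to be short. So the single-disk reduction you propose does not produce a path in $H$.

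The paper sidesteps this by never trying to connect $p'$ and $q'$ directly. Using properties (P2)--(P3) of $S$ it manufactures a chain $u_1=p',u_2,u_3,u_4,v_4,v_3,v_2,v_1=q'$ of points of $S$ in which every consecutive pair lies in a common cell or in two cells already linked by a short edge, so each consecutive distance is at most $1$. Corollary~\ref{Delaunay-cor} then yields a $\DT{S}$-path inside each of the seven diametral disks $D(u_1,u_2),\dots,D(v_2,v_1)$. The constant $\ratio$ comes from the disk--cell intersection lemmas (Lemmas~\ref{Circle-Square-lemma}--\ref{neighbor-cell-lemma}), which show that the union of these seven disks meets at most $17$ grid cells; with at most $20$ points of $S$ per cell (property (P1)) this gives at most $340$ vertices on the $(u_1,v_1)$-path, and adding the two connector edges yields $341$. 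That chain-of-seven-disks routing, not a single enclosing disk, is the missing idea in your outline.
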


\subsection{Computation of $S$}
\label{S-section}
In this section, we compute the subset $S$; we will see properties of $S$ at the end of this section.
Let $\Gamma$ be a regular square-grid on the plane with squares of diameter 1. The side-length of these squares is $1/\sqrt{2}$. Without loss of generality we assume that no point of $P$ lies on a grid line (this can be achieved by moving the grid by a small amount horizontally and vertically).
Let $E$ be the edge set containing the shortest edge of $\UDG{P}$ that runs between any two nonempty cells of $\Gamma$ if such an edge exists. Since every edge of $\UDG{P}$ has length at most 1, for every cell $\pi$ there are at most 20 edges in $E$ going from $\pi$ to other cells $\pi_1,\dots,\pi_{20}$ as depicted in Figure~\ref{grid-fig}
. 
Let $V(E)$ be the set of endpoints of $E$, i.e., endpoints of the edges of $E$. 
The set $V(E)$ has the following two properties:

\begin{itemize}
	\item Every cell of $\Gamma$ contains at most 20 points of $V(E)$.
	\item For every cell $\pi\in \Gamma$ and every $i\in\{1,\dots,20\}$ if there is an edge in $\UDG{P}$ between $\pi$ and $\pi_i$, then there are two points $s_i,t_i\in V(E)$ such that $s_i\in \pi$, $t_i\in\pi_i$, and $(s_i,t_i)$ is the shortest edge of $\UDG{P}$ that runs between $\pi$ and $\pi_i$. 
\end{itemize}

\begin{figure}
	\centering
	\includegraphics[width=.7\columnwidth]{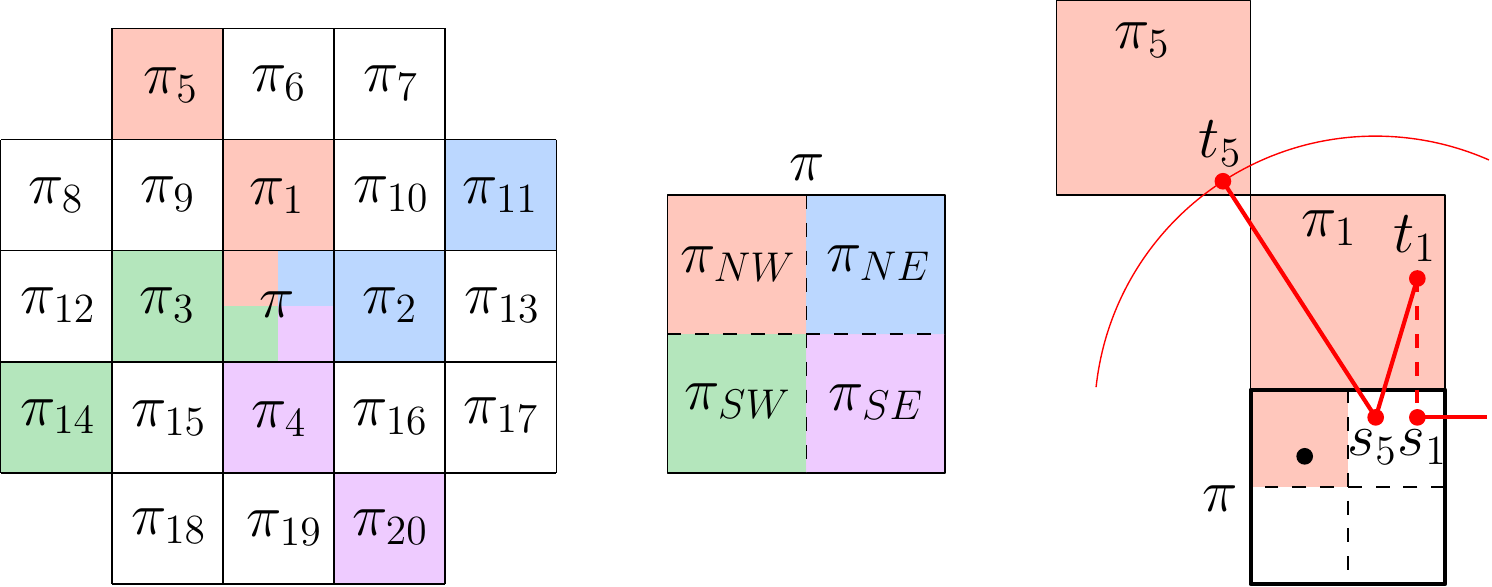}
	\caption{Illustration of the computation of $S$.}
	\label{grid-fig}
\end{figure}

We want to modify the edge set $E$ and also compute a point set $T$ such that $V(E)\cup T$ satisfies some more properties that we will see later. To that end we partition every cell $\pi$ of $\Gamma$ into four sub-cells of diameter $1/2$, namely $\pi_{NW}, \pi_{NE},\pi_{SW}, \pi_{SE}$ as in Figure~\ref{grid-fig}. 
For each cell $\pi$, consider four triplets $(\pi_{NW},\allowbreak \pi_1,\pi_5)$, $(\pi_{NE},\pi_2,\pi_{11})$, $(\pi_{SW},\allowbreak \pi_3,\pi_{14})$, and $(\pi_{SE},\pi_4,\pi_{20})$; these triplets are colored in Figure~\ref{grid-fig}. Let $T$ be the empty set. We perform the following three-step process on each of the four triplets of every cell $\pi$. 
We describe the process only for $(\pi_{NW},\pi_1,\pi_5)$; the processes of other triplets are analogous. In our description ``a point of $\pi_{NW}$'' refers to a point of $P$ that lies in $\pi_{NW}$. 


\begin{enumerate}
	\item[1.] If $\pi_{NW}$ is empty (contains no point of $P$) then we do nothing and stop the process. Assume that $\pi_{NW}$ contains some points of $P$. If $\pi_{NW}$ contains an endpoint of $E$, i.e. an endpoint of some edge of $E$, then we do nothing and stop the process. 
	\item[2.] Assume now that $\pi_{NW}$ contains some points of $P$ but does not contain any endpoint of $E$. If there is no edge in $E$ that runs between $\pi$ and $\pi_1$ or between $\pi$ and $\pi_5$ then we take a point of $\pi_{NW}$ arbitrary and add it to $T$, and then stop the process.  
	\item[3.] Assume that $E$ contains an edge between $\pi$ and $\pi_1$, and an edge between $\pi$ and $\pi_5$. We are now in the case where $\pi_{NW}$ contains some points of $P$ but not any endpoint of $E$, and both $s_1$ and $s_5$ exist. If $s_1=s_5$, then we add a point of $\pi_{NW}$ to $T$, and then stop the process. Assume that $s_1\neq s_5$. Since $\pi_{NW}$ does not contain any endpoint of $E$, the points $s_1$ and $s_5$ do not lie in $\pi_{NW}$. In particular, $s_5$ lies in sub-cell $\pi_{NE}$ because the distance between $\pi_5$ and each of $\pi_{SE}$ and $\pi_{SW}$ is more than 1; however $s_1$ might lie in other sub-cells. In this setting the disk with center $s_5$ and radius $|s_5t_5|$ contains the entire $\pi_1$ (see Figure~\ref{grid-fig}), and thus the distance between $s_5$ and any point in $\pi_1$ is at most $1$. We replace the edge $(s_1,t_1)$ of $E$ by the edge $(s_5,t_1)$, which has length at most one; see Figure~\ref{grid-fig}. Then we add a point of $\pi_{NW}$ to $T$, and then stop the process. 
\end{enumerate}

This is the end of process for triplet $(\pi_{NW},\pi_1,\pi_5)$. After performing this process on all triplets of all cells, we obtain an edge set $E$ and a point set $T$. We define the subset $S$ to be union of $T$ and the endpoints of edges of $E$, i.e., $S=V(E)\cup T$. We will use the properties in the following lemma in correctness proof and analysis of hop stretch factor.

\begin{lemma}
	\label{S-lemma}
	The set $S$ satisfies the following four properties:
	\begin{enumerate}
		\item[$(\textup{P}1)$] Every cell of $\Gamma$ contains at most 20 points of $S$.
		\item[$(\textup{P}2)$] For every cell $\pi$ and every $i\in\{1,2,3,4\}$, if there is an edge in $\UDG{P}$ between $\pi$ and $\pi_i$, then there are two points $s_i,t_i\in S$ such that $s_i\in \pi$, $t_i\in\pi_i$, and $|s_it_i|\leqslant 1$.
		\item[$(\textup{P}3)$] For every cell $\pi$ and every $i\in\{5,\dots,20\}$, if there is an edge in $\UDG{P}$ between $\pi$ and $\pi_i$, then there are two points $s_i,t_i\in S$ such that $s_i\in \pi$, $t_i\in\pi_i$, $|s_it_i|\leqslant 1$, and $(s_i,t_i)$ is the shortest edge of $\UDG{P}$ that runs between $\pi$ and $\pi_i$.
		\item[$(\textup{P}4)$] The set $S$ contains at least one point from every nonempty sub-cell $\pi_{NW}$, $\pi_{NE}$, $\pi_{SW}$, $\pi_{SE}$ of each cell $\pi$. 	
	\end{enumerate} 
\end{lemma}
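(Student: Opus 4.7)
My plan is to prove the four properties in turn, leaving the counting bound (P1) to the end.

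Property (P4) is an almost-definitional check. For any nonempty sub-cell, say $\pi_{NW}$, processing the triplet $(\pi_{NW},\pi_1,\pi_5)$ either halts at step~1 because $\pi_{NW}$ already contains an endpoint of $E$ (which lies in $V(E)\subseteq S$), or it falls into step~2 or step~3 and appends a point of $\pi_{NW}$ to $T\subseteq S$. In either case $S$ meets the sub-cell.

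For (P2) and (P3) I would pin down exactly which edges of $E$ can ever be modified. Step~3 of cell $\pi$'s process only alters edges from $\pi$ to its four adjacent cells $\pi_1,\dots,\pi_4$, and only by replacing the $\pi$-endpoint; the $\pi_1$-endpoint of such an edge can separately be altered by $\pi_1$'s own analogous step~3. An edge between $\pi$ and a non-adjacent neighbor $\pi_i$ ($i\ge 5$) is therefore touched by neither $\pi$'s nor $\pi_i$'s process (the latter by symmetry of grid-adjacency), so it remains exactly the original shortest UDG-edge between $\pi$ and $\pi_i$, with endpoints in $V(E)\subseteq S$ and length at most $1$ — this gives (P3). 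For (P2) the $\pi_1$-edge may be replaced at either or both endpoints, but the disk argument already made inside step~3 (a disk of radius $|s_5t_5|$ around $s_5$ contains all of $\pi_1$) still bounds the final edge's length by $1$; its two endpoints are endpoints of $E$ and hence lie in $V(E)\subseteq S$.

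Property (P1) is the heart of the proof, since the crude bound $|V(E)\cap\pi|+|T\cap\pi|\le 20+4$ is too weak. The plan is to construct an injection $\phi\colon S\cap\pi\to\{1,\dots,20\}$ into the twenty ``neighbor slots'' of $\pi$. For $v\in V(E)\cap\pi$ I set $\phi(v)=\min\{j: v \text{ is the $\pi$-endpoint of the final $\pi_j$-edge}\}$; this is injective because each final edge has a unique $\pi$-endpoint. For the $T$-point placed in $\pi_{NW}$ by the NW-triplet I send it to slot~$5$: in the step~2 case, the $\pi_5$-edge is absent from the final $E$ (or, if only the $\pi_1$-edge is absent, I use slot~$1$ instead), so no $V(E)$-charge can reach the target slot; in the step~3 case, the same point $s_5$ is the $\pi$-endpoint of both the unchanged $\pi_5$-edge and the replaced $\pi_1$-edge, so $\phi(s_5)=1$ and slot~$5$ is free. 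Handling the NE, SW, SE triplets symmetrically (sending their $T$-points to slots $11$, $14$, $20$ respectively) produces four pairwise-distinct $T$-slots, and by the same case analysis none of them clashes with a $V(E)$-charge. Consequently $\phi$ is injective and $|S\cap\pi|\le 20$. The main subtlety in this plan is the step~2 case analysis, which ultimately reduces to the fact that no $V(E)$-charge can occupy a slot whose edge does not exist.
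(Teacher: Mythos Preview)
Your proposal is correct and follows essentially the same approach as the paper. The arguments for (P2)--(P4) match the paper's almost verbatim (you are in fact slightly more careful about (P2), noting that both endpoints of a $\pi$--$\pi_1$ edge may be replaced). For (P1) the paper uses a discharging argument---assign charge~$1$ to each of the twenty neighbor slots, push charges onto points of $S\cap\pi$, and show every point ends up with charge at least~$1$---which is exactly the dual of your injection $\phi$. The one cosmetic difference is that in the step~3 ($s_1\neq s_5$) case the paper routes the $\pi_1$-charge to the new $T$-point (so $T$-point $\leftrightarrow$ slot~$1$, $s_5\leftrightarrow$ slot~$5$), whereas you send the $T$-point to slot~$5$ and let $\phi(s_5)=1$; either bijection works.
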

\begin{proof}
Recall that $E$ initially contains shortest edges between different cells. In step 3 we replace only the edges of $E$, that run between each cell $\pi$ and the cells $\pi_1$, $\pi_2$, $\pi_3$, $\pi_4$, with new edges of length at most $1$. Therefore properties (P2) and (P3) hold. Every nonempty sub-cell contains either a point in $V(E)$ (step 1) or a point in $T$ (steps 2 and 3), and thus property (P4) holds. 

To verify property (P1), we use the discharging method as follows. Consider one cell $\pi$.
Before the process, we give charge $1$ to each $\pi_i$ for $i\in\{1,\dots,20\}$. Thus, the total available charge for $\pi$ is 20. Then, for every edge $(s_i,t_i)\in E$ we move the charge of $\pi_i$ to $s_i$. Since each $s_i$ can be an endpoint of more than one edge of $E$, it may get charges from more than one cell. During the process we move charges as follows. In step 2, if there is no edge in $E$ that runs between $\pi$ and $\pi_1$ or between $\pi$ and $\pi_5$, then we move the charge of $\pi_1$ or $\pi_5$ to the point of $\pi_{NW}$ that we add to $T$, respectively. 
Now consider step 3. If $s_1=s_5$ then $s_1$ has charge at least 2 that come from $\pi_1$ and $\pi_5$. In this case we move charge 1 from $s_1$ to the point of $\pi_{NW}$ that we add to $T$. If $s_1\neq s_5$, then after replacing $(s_1,t_1)$ with $(s_5,t_1)$, we move the charge of $\pi_1$ from $s_1$ to the point of $\pi_{NW}$ that we add to $T$. 
After this replacement if $s_1$ is not an endpoint of any edge of $E$ other than $(s_1,t_1)$ then $s_1$ gets removed from $V(E)$, otherwise it still holds charges of some cells other than $\pi_1$. Thus, after processing $\pi$, the final charge of every point of $S$, that lies in $\pi$, is at least $1$. Observe that $\pi_1$ and $\pi_5$ belong to only one of the four triplets that are associated to $\pi$, and thus we do not double count their charges. Since the total available charge for $\pi$ was $20$, it turns out that the number of points of $S$, that lie in $\pi$, is at most $20$.

While processing each cell $\pi$, we add to $T$ only points that lie in $\pi$. Moreover, the edge-replacement of step 3, does not add any new point to $V(E)$. After processing $\pi$, there is an edge in $E$ running between $\pi$ and $\pi_i$ if and only if there was such an edge before processing $\pi$. Therefore, after processing all cells, every cell contains at most $20$ points of $S$, and thus (P1) holds.
\end{proof}

\subsection{Correctness Proof}
\label{correctness-section}
In this section we prove the correctness of our algorithm. Recall the grid $\Gamma$, and the subset $S$ of $P$ that is computed in Section~\ref{S-section}. Recall that our algorithm computes the Delaunay triangulation $\DelT{S}$ and removes every edge of length more than $1$ to obtain $\DT{S}$, and then connects every point of $P\setminus S$ to its closest visible vertex of $\DT{S}$. Let $H$ denotes the resulting graph. One can simply verify that this algorithm takes polynomial time. Since $\DelT{S}$ is plane, its subgraph $\DT{S}$ is also plane. It is implied from Theorem~\ref{pslg-thr} (where $\DT{S}$ and $P\setminus S$ play the roles of $G$ and $Q$) that $H$ is plane. As we stated at the outset, except for the computation of $S$ which is a little more involved, the algorithm and the planarity proof are straightforward.

To finish the correctness proof it remains to show that every edge of $H$ has length at most $1$. Consider any edge $e$ of $H$. 
By our construction, either the two endpoints of $e$ belong to $S$, or one endpoint of $e$ belongs to $S$ and its other endpoint belongs to $P\setminus S$. If both endpoints of $e$ are in $S$, then $e$ belongs to $\DT{S}$ and hence has length at most 1. If one endpoint of $e$ is in $S$ and its other endpoint is in $P\setminus S$, then by following lemma the length of $e$ is at most $1/\sqrt{2}$.

\begin{figure}[htb]
	\centering
	\includegraphics[width=.28\columnwidth]{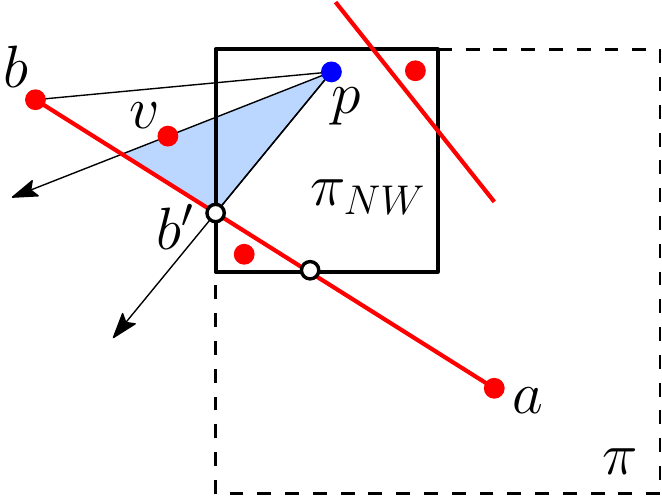}
	\caption{Illustration of the proof of Lemma~\ref{unselected-length-lemma}. The red points belong to $S$.}
	\label{unselected-fig}
\end{figure}

\begin{lemma}
	\label{unselected-length-lemma}
	The length of every edge of $H$, that has an endpoint in $S$ and an endpoint in $P\setminus S$, is at most $1/\sqrt{2}$.
\end{lemma}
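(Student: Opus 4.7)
The plan is to bound the distance $|ps|$, where $s$ is the closest visible vertex of $\DT{S}$ from $p$. First I would locate $p$ in its sub-cell $\pi'$ of the grid $\Gamma$. Since $\pi'$ is nonempty (it contains $p$), property~(P4) of Lemma~\ref{S-lemma} guarantees a point $s' \in S \cap \pi'$; because $\pi'$ has diameter $1/2$, we have $|ps'| \leq 1/2$. If $s'$ itself is visible from $p$ in $\DT{S}$, then $|ps| \leq |ps'| \leq 1/2 \leq 1/\sqrt{2}$ and we are done immediately, so the rest of the plan addresses the case in which $s'$ is not visible.

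In that case the segment $ps'$ is crossed by some edge of $\DT{S}$; I would pick $(a,b)$ to be the first such edge encountered moving from $p$ toward $s'$, and let $x$ be the crossing point. Then $(a,b)$ lies on the boundary of the face $F$ of $\DT{S}$ containing $p$, and since $\pi'$ is convex and contains both $p$ and $s'$, the point $x$ also lies in $\pi'$; in particular the segment $ab$ intersects $\pi'$. I would now invoke Lemma~\ref{Convex-lemma} with $C = \pi'$ (of diameter $1/2$) and segment $ab$ (of length at most $1$, since $(a,b) \in \DT{S}$), taking $r = p$, to conclude
\[
\min\{|pa|,\, |pb|\} \;\leq\; \sqrt{\tfrac{1}{4} + \tfrac{1}{4}} \;=\; 1/\sqrt{2}.
\]
After relabeling I assume $|pa| \leq 1/\sqrt{2}$.

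The final and most delicate step is to produce a vertex of $\DT{S}$ that is actually \emph{visible} from $p$ at distance at most $1/\sqrt{2}$ --- note that $F$ need not be convex, so $a$ itself may fail to be visible. To handle this, I would consider the maximal portion of the edge $(a,b)$ visible from $p$ (a subsegment containing $x$) and let $x_L$ be its endpoint on the $a$-side of $x$. Since $x_L$ lies on segment $ax$, the triangle inequality gives $|p x_L| \leq \max\{|pa|, |px|\} \leq 1/\sqrt{2}$. If $x_L = a$, then $a$ itself is a visible vertex at distance at most $1/\sqrt{2}$. Otherwise $x_L$ is a \emph{shadow} point: rotating the ray $\ray{px}$ toward $a$ causes the first-hit edge of $\DT{S}$ to switch from $(a,b)$ to some other edge, and this transition occurs exactly when the rotating ray sweeps past a vertex $c$ of $\DT{S}$ lying on the open segment $p x_L$. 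Such a $c$ is visible from $p$ (the segment $pc$ is an initial portion of the visible segment $p x_L$) and satisfies $|pc| < |p x_L| \leq 1/\sqrt{2}$. In either subcase, there is a vertex of $\DT{S}$ visible from $p$ within distance $1/\sqrt{2}$, so the closest visible vertex $s$ also satisfies $|ps| \leq 1/\sqrt{2}$.
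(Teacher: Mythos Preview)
Your proposal is correct and follows essentially the same approach as the paper's proof: locate a point $s'\in S$ in the sub-cell of $p$ via property~(P4), handle the easy case where $s'$ is visible, otherwise pick a blocking edge $(a,b)$ of $\DT{S}$ that meets the sub-cell, apply Lemma~\ref{Convex-lemma} to bound the distance from $p$ to one endpoint by $1/\sqrt{2}$, and then use a ray-rotation/sweep argument to exhibit an actually visible vertex within that distance. The only cosmetic differences are that the paper selects its blocking edge by requiring the two intersection points with the boundary of the sub-cell to be visible (and rotates from that boundary point $b'$), whereas you take the first edge crossing $ps'$ (and rotate from the crossing point $x$); both choices guarantee a visible starting direction and lead to the same bound. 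One small quibble: the bound $|px_L|\leq\max\{|pa|,|px|\}$ is not the triangle inequality but rather the convexity of the distance function along the segment $ax$; the conclusion is of course correct.
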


\begin{proof}	
Consider any edge $(p,s)\in H$ with $p\in P\setminus S$ and $s\in S$. By our construction, $s$ is the closest visible vertex of $\DT{S}$ from $p$. Thus, to prove the lemma, it suffices to show the existence of a vertex $v\in \DT{S}$ that is visible from $p$ and for which $|pv|\leqslant 1/\sqrt{2}$; this would imply that the distance between $p$ and $s$, which is the closest visible vertex from $p$, is at most $1/\sqrt{2}$. In the rest of the proof we show the existence of such vertex $v$.

Let $\pi$ be the cell that contains $p$ (the dashed cell in Figure~\ref{unselected-fig}). After a suitable rotation we assume that $p$ lies in sub-cell $\pi_{NW}$. Since $\pi_{NW}$ is nonempty, by property (P4) in Lemma~\ref{S-lemma} the set $S$ contains at least one point from $\pi_{NW}$. Let $S'$ be the set of points of $\pi_{NW}$ that are in $S$. Notice that $S'\subseteq S$ and $S'\neq \emptyset$. If any point of $S'$ is visible from $p$, then this point is a desired vertex $v$ with $|pv|\leqslant 1/2$ because the diameter of $\pi_{NW}$ is $1/2$.

Assume that no point of $S'$ is visible from $p$. 
The visibility of (points of) $S'$ from $p$ is blocked by some edges of $\DT{S}$; these edges properly cross $\pi_{NW}$ and separate $p$ from points of $S'$ (the red edges in Figure~\ref{unselected-fig}). Among these edges take one whose intersection points with the boundary of $\pi_{NW}$ are visible from $p$ (observe that such an edge always exists). 
Denote this edge by $(a,b)$. Since the diameter of $\pi_{NW}$ is $1/2$ and $|ab|\leqslant 1$, it is implied from Lemma~\ref{Convex-lemma} that the distance from $p$ to $a$ or to $b$ is at most $1/\sqrt{2}$; after a suitable relabeling assume that $|pb|\leqslant 1/\sqrt{2}$. Of the two intersection points of $(a,b)$ with the boundary of $\pi_{NW}$, denote by $b'$ the one that is closer to $b$. By our choice of $(a,b)$, $b'$ is visible from $p$. We rotate the ray $\ray{pb'}$ towards $b$ and stop as soon as hitting a vertex $v\in S$ in triangle $\bigtriangleup pbb'$ (it might be that $v=b$). The vertex $v$ is visible from $p$. Since $v$ is in triangle $\bigtriangleup pbb'$ it holds that $|pv|\leqslant \max\{|pb|,|pb'|\}$. Since $|pb|\leqslant 1/\sqrt{2}$ and $|pb'|\leqslant 1/2$ it turns out that $|pv|\leqslant 1/\sqrt{2}$.   
\end{proof}

\subsection{Hop Stretch Factor}
\label{stretch-section}
In this section we prove that the hop stretch factor of $H$ is at most $\ratio$. We show that for any edge $(u,v)\in\UDG{P}$ there exists a path of length at most $\ratio$ between $u$ and $v$ in $H$. 

In this section a ``cell'' refers to the interior of a square of $\Gamma$, a ``grid point'' refers to the intersection point of a vertical and a horizontal grid line, and a ``corner of $\pi$'' refers to a grid point on the boundary of a cell $\pi$. We define {\em neighbors} of a cell $\pi$ to be the set of eight cells that share sides or corners with $\pi$. We partition the neighbors of $\pi$ into $+$-{\em neighbors} and $\times$-{\em neighbors}, where $+$-neighbors are the four cells that share sides with $\pi$, and $\times$-neighbors are the four cells each sharing exactly one grid point with $\pi$. In Figure~\ref{grid-fig} the cells $\pi_1,\pi_2,\pi_3,\pi_4$ are the $+$-neighbors of $\pi$, and the cells $\pi_9,\pi_{10},\pi_{15},\pi_{16}$ are the $\times$-neighbors of $\pi$.

Consider any two points $p,q\in S$. If $|pq|\leqslant 1$ then every edge of $\DelT{S}$, that lies in $D(p,q)$, has length at most 1, and thus all these edges are present in $\DT{S}$. Combining this with Theorem~\ref{Delaunay-thr} we get the following corollary.

\begin{corollary}
	\label{Delaunay-cor}
	For any two points $p,q\in S$, with $|pq|\leqslant 1$, there exists a path, between $p$ and $q$ in $\DT{S}$, that lies in $D(p,q)$. 
\end{corollary}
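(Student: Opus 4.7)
The plan is to derive the corollary directly from Theorem~\ref{Delaunay-thr} by verifying that the path it produces survives the length-truncation that turns $\DelT{S}$ into $\DT{S}$. Apply Theorem~\ref{Delaunay-thr} with the specific choice $D=D(p,q)$, which is a legitimate choice since $p$ and $q$ lie on the boundary of their own diametral disk. This yields a path $\Pi$ between $p$ and $q$ in $\DelT{S}$ that lies entirely in $D(p,q)$.

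Next I would argue that every edge of $\Pi$ has length at most $1$. The diametral disk $D(p,q)$ has diameter equal to $|pq|$, and by hypothesis $|pq|\leqslant 1$. Since both endpoints of any edge of $\Pi$ lie inside $D(p,q)$, their Euclidean distance is bounded by the diameter of $D(p,q)$, hence is at most $1$. Consequently none of the edges of $\Pi$ is removed in the passage from $\DelT{S}$ to $\DT{S}$, so $\Pi$ is in fact a path in $\DT{S}$ lying in $D(p,q)$, which is exactly the conclusion.

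There is essentially no obstacle: the whole content is already supplied by Theorem~\ref{Delaunay-thr}, and the only thing to observe is the elementary fact that a disk of diameter at most $1$ cannot contain two points at distance greater than $1$. Thus the proof is a one-line application plus the diameter observation, which is why the authors phrased this as a corollary rather than a theorem.
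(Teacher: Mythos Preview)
Your argument is correct and is essentially identical to the paper's own justification: apply Theorem~\ref{Delaunay-thr} with $D=D(p,q)$ to get a Delaunay path inside $D(p,q)$, then note that since $D(p,q)$ has diameter $|pq|\leqslant 1$, every edge of this path has length at most $1$ and hence survives in $\DT{S}$.
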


\begin{figure}[htb]
	\centering
	\setlength{\tabcolsep}{0in}
	$\begin{tabular}{cc}
	\multicolumn{1}{m{.5\columnwidth}}{\centering\includegraphics[width=.3\columnwidth]{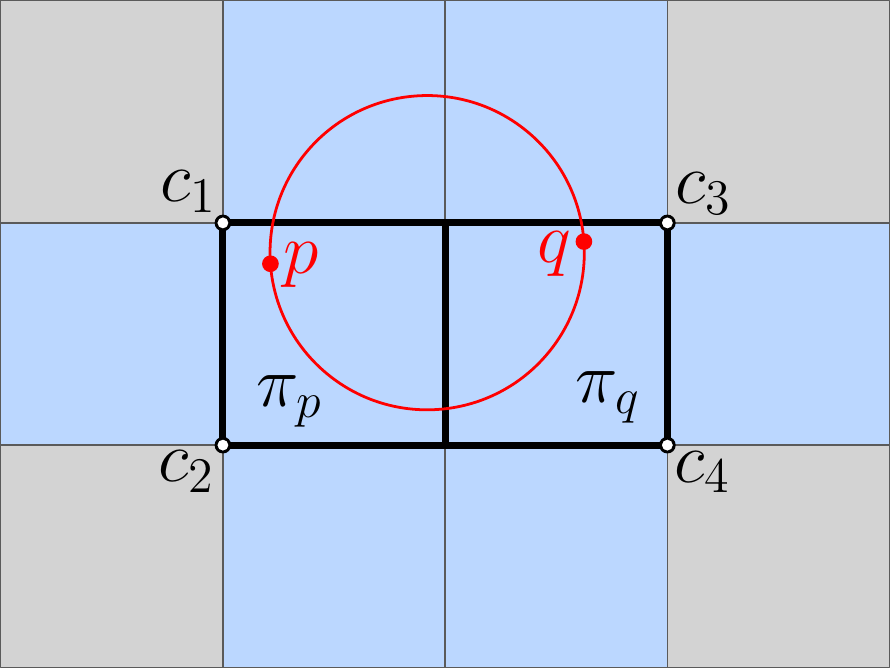}}
	&\multicolumn{1}{m{.5\columnwidth}}{\centering\includegraphics[width=.3\columnwidth]{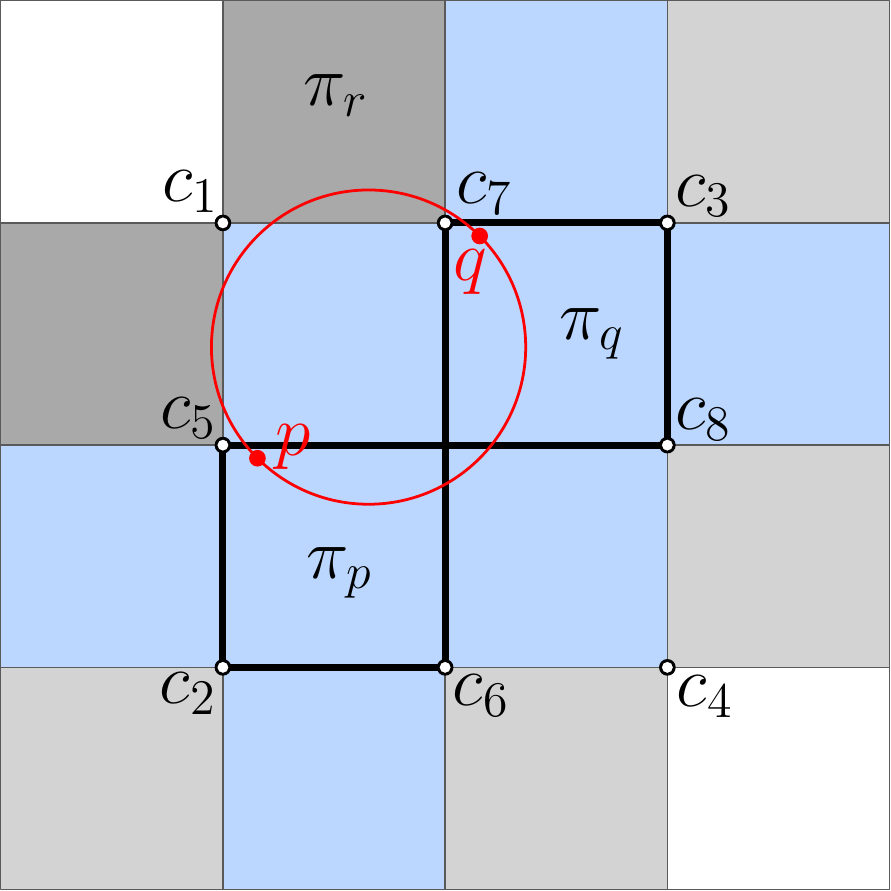}}\\
	(a) Configuration A&(b) Configuration B\\
	\multicolumn{1}{m{.5\columnwidth}}{\centering\includegraphics[width=.35\columnwidth]{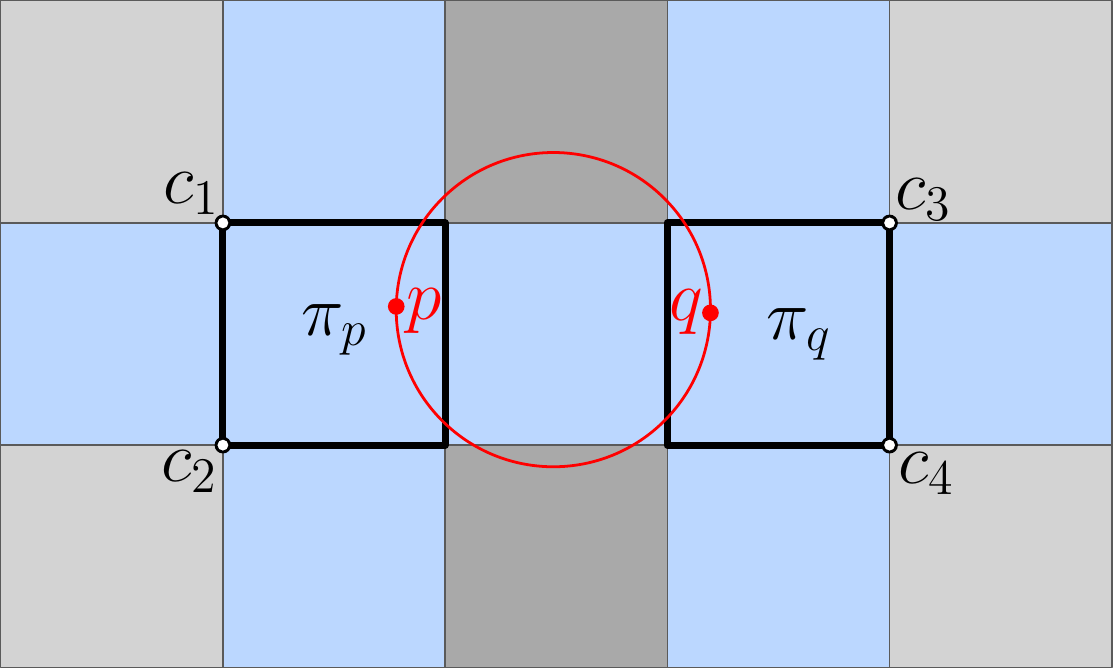}}
	&\multicolumn{1}{m{.5\columnwidth}}{\centering\includegraphics[width=.35\columnwidth]{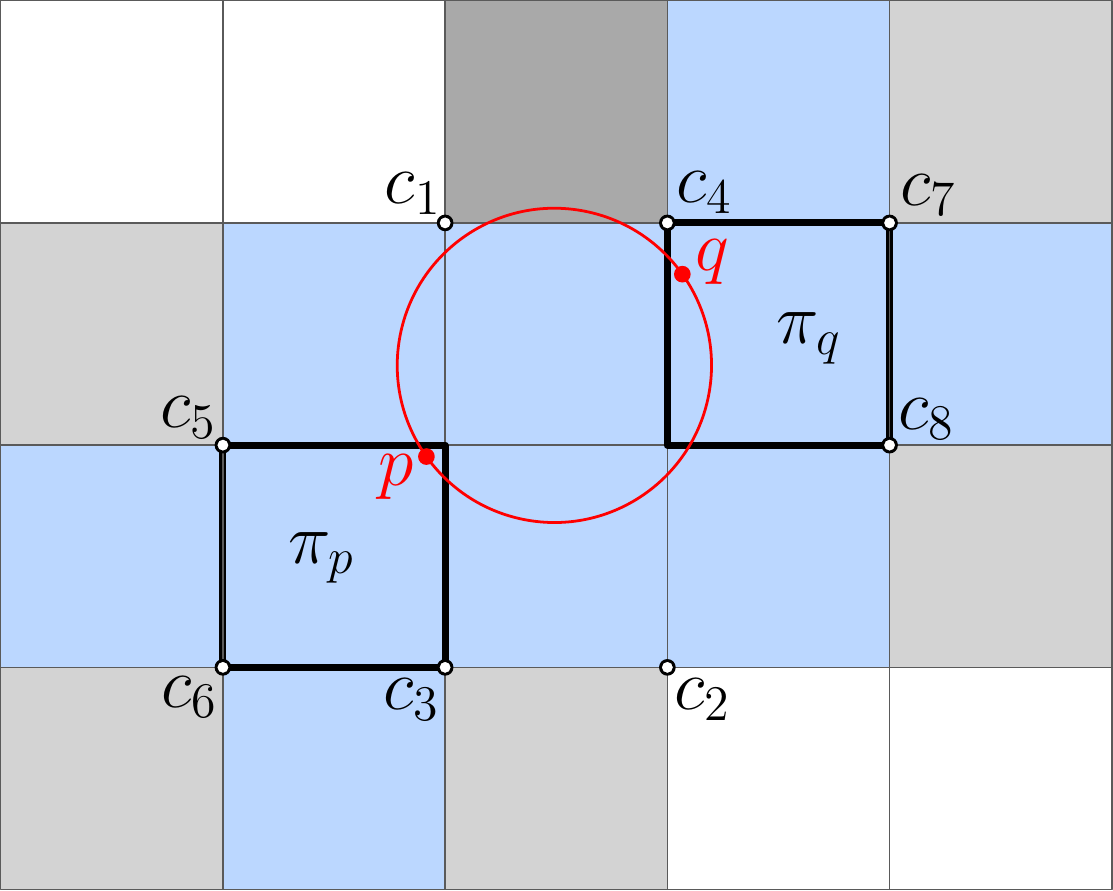}}\\
	(c) Configuration C&(d) Configuration D
	\end{tabular}$
	\caption{Relative positions of the cells $\pi_p$ and $\pi_q$ where $|pq|\leqslant 1$.}
	\label{config-fig}
\end{figure}

Consider any two points $p$ and $q$ in the plane that lie in different cells, say $\pi_p$ and $\pi_q$. If $|pq|\leqslant 1$ then the relative positions of $\pi_p$ and $\pi_q$ is among four configurations A, B, C, and D that are shown in Figure~\ref{config-fig}. In the rest of this section we consider different configurations of a disk intersecting some cells of $\Gamma$. Although mentioned before, we emphasis that a ``cell'' refers to the interior of a square of grid (and hence a cell is open and does not contain its boundary) while a ``disk'' is closed (and hence contains its boundary).

\subsubsection{Disk-Cell Intersections}
To cope with the number of cases that appear in the analysis of hop stretch factor we use lemmas \ref{Circle-Square-lemma}, \ref{two-cell-lemma}, and \ref{neighbor-cell-lemma} about disk-cell intersections. These lemmas enable us to reduce the number of cases in our analysis. We say that an element $x$ is ``outside'' a set $X$ if $x\notin X$. 

\begin{lemma}
	\label{Circle-Square-lemma}
	Let $p$ and $q$ be any two points in the plane with $|pq|\leqslant 1$.
	\begin{enumerate}
		\item[1.] If $p$ and $q$ are in different cells, then $D(p,q)$ intersects at most $7$ cells.
		\item[2.] If $p$ and $q$ are in the same cell $\pi$, then $D(p,q)$ can intersect only $\pi$ and its four $+$-neighbors.
	\end{enumerate}
\end{lemma}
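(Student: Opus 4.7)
My plan is to treat both parts uniformly by analyzing which cells of $\Gamma$ can meet the disk $D=D(p,q)$, which has radius $r=|pq|/2\leqslant 1/2$ and is centered at the midpoint $c$ of $pq$. Let $s=1/\sqrt{2}$ denote the grid side length, and let $\pi_c$ be the cell containing $c$. Since any cell outside the $3\times 3$ block around $\pi_c$ is at distance at least $s>r$ from $c$, the disk $D$ can meet at most the $9$ cells in that block. The remaining task is to rule out $\times$-neighbors appropriately in each part.

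For Part~1, choose coordinates so that $\pi_c=(0,s)^2$ and $c=(a,b)$. A $\times$-neighbor is met by $D$ iff the corresponding corner of $\pi_c$ lies strictly inside $D$, since the corner is the infimum-distance point of the open $\times$-cell from $c$. Pair the four corners into the two diagonal pairs and apply the elementary bound $x^2+(s-x)^2\geqslant s^2/2$: for each diagonal pair, the sum of squared distances from $c$ equals $a^2+b^2+(s-a)^2+(s-b)^2\geqslant s^2=1/2\geqslant 2r^2$. Hence in each diagonal pair at most one corner lies at distance strictly less than $r$ from $c$, so $D$ meets at most $2$ of the four $\times$-neighbors. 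Together with $\pi_c$ and at most $4$ $+$-neighbors this yields at most $7$ cells; notice the argument does not even use that $p$ and $q$ lie in different cells.

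For Part~2, I would exploit the hypothesis $p,q\in\pi$ to exclude all four $\times$-neighbors. Write $p=c+r(\cos\theta,\sin\theta)$ and $q=c-r(\cos\theta,\sin\theta)$ with $\pi=(0,s)^2$. The requirements $p,q\in\pi$ force $|r\cos\theta|<\min(c_x,s-c_x)$ and $|r\sin\theta|<\min(c_y,s-c_y)$. Squaring and summing the bounds $|r\cos\theta|<s-c_x$ and $|r\sin\theta|<s-c_y$ yields $r^2<(s-c_x)^2+(s-c_y)^2$, so the NE corner of $\pi$ is strictly farther than $r$ from $c$ and the NE $\times$-neighbor is disjoint from $D$. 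Running the same argument with each of the other three corners (using $c_x$ or $c_y$ in place of $s-c_x$ or $s-c_y$) rules out all four $\times$-neighbors, leaving only $\pi$ and its four $+$-neighbors.

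The main subtlety is the open/closed distinction: cells are open squares while $D$ is closed, so ``open $\times$-cell meets $D$'' is exactly ``the shared corner lies strictly inside $D$''. Matching this strict inequality with the strict inequalities in both pairing arguments is what keeps the ``at most $2$'' bound tight at $r=1/2$ and avoids any off-by-one error.
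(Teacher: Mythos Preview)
Your argument is correct. For Part~1 you actually supply the proof that the paper only gestures at (the paper merely asserts that $D(p,q)$ contains at most two grid points in its interior): your diagonal-pairing bound $a^2+(s-a)^2+b^2+(s-b)^2\geqslant s^2=1/2\geqslant 2r^2$ is precisely what establishes that fact, and your reduction to the $3\times 3$ block around $\pi_c$ makes the passage from ``at most two interior corners'' to ``at most seven cells'' explicit. One small wrinkle you skipped: when $p$ and $q$ lie in different cells the midpoint $c$ may sit on a grid line, so ``the cell $\pi_c$ containing $c$'' can be ambiguous; picking any cell whose closure contains $c$ leaves all your inequalities intact, so this is harmless.

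For Part~2 you take a genuinely different route from the paper. The paper observes that for each corner $c$ of $\pi$ the angle $\angle pcq$ is acute---both $p$ and $q$ lie in the open right-angle cone with apex $c$ determined by the two sides of $\pi$ meeting at $c$---and then Thales's theorem immediately places $c$ outside $D(p,q)$. Your parametrisation $p,q=c\pm r(\cos\theta,\sin\theta)$ together with the bound $r^2<(s-c_x)^2+(s-c_y)^2$ reaches the same conclusion by direct computation. Both arguments are valid; the Thales argument is shorter and coordinate-free, while yours dovetails with the setup you already built for Part~1 and makes the strict-inequality bookkeeping (open cells versus closed disk) completely transparent.
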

\begin{proof}
Statement 1 is implied by the fact that $D(p,q)$ contains at most two grid points in its interior. To verify statement 2, it suffices to show that $D(p,q)$ does not contain any corner of $\pi$. Consider a corner $c$ of $\pi$. Since $p$ and $q$ lie in $\pi$, the convex angle $\angle pcq$ is acute. Combing this with Thales's theorem implies that $c$ is outside $D(p,q)$.
\end{proof}

\begin{lemma}
	\label{two-cell-lemma}
	Let $p$ and $q$ be any two points in the plane that are in different cells $\pi_p$ and $\pi_q$. Let $X$ be the set containing the cells $\pi_p$ and $\pi_q$ and their $+$-neighbors.
	\begin{enumerate}
		\item[1.] If $|pq|\leqslant 1$, then $D(p,q)$ does not intersect any cell outside the neighborhoods of $\pi_p$ and $\pi_q$.
		\item[2.] If $|pq|\leqslant 1$, then $D(p,q)$ intersects at most two cells outside $X$.
		\item[3.] If $|pq|\leqslant 1/\sqrt{2}$, then $D(p,q)$ does not intersect any cell outside $X$.
	\end{enumerate}
\end{lemma}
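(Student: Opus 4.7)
The plan hinges on one geometric fact used throughout: by Thales's theorem, for every $x \in D(p,q)$ the angle $\angle pxq$ is at least $90^{\circ}$, equivalently $|xp|^2 + |xq|^2 \leqslant |pq|^2$. Everything else is combinatorics of the grid. For Statement~1 I would argue by contradiction. If $x\in\pi'\cap D(p,q)$ and $\pi'$ is a neighbor of neither $\pi_p$ nor $\pi_q$, then $\pi'$ is separated from $\pi_p$ by at least one intervening row or column of cells, and since $p$ and $x$ lie in the interiors of their open cells this gives the strict bound $|xp|>1/\sqrt{2}$; symmetrically $|xq|>1/\sqrt{2}$. Squaring and adding yields $|xp|^2+|xq|^2>1\geqslant |pq|^2$, contradicting Thales.

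For Statement~3 I would use that a disk of diameter at most $1/\sqrt{2}$ fits inside an axis-aligned square whose side equals the grid side, and such a square overlaps at most four cells forming a $2\times 2$ block. Both $\pi_p$ and $\pi_q$ lie in this block because the disk meets them; and the hypothesis $|pq|\leqslant 1/\sqrt{2}$ forces $\pi_p$ and $\pi_q$ to be neighbors of each other (else the open-cell argument from Statement~1 gives $|pq|>1/\sqrt{2}$). A direct check in the two subcases ($\pi_p,\pi_q$ are $+$-neighbors, or $\times$-neighbors) shows the two remaining cells of the $2\times 2$ block are $+$-neighbors of $\pi_p$ or $\pi_q$, hence in $X$.

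For Statement~2, Statement~1 reduces the question to the $\times$-neighbors of $\pi_p$ or $\pi_q$ that lie outside $X$, and I would do a case analysis over the four configurations A--D of Figure~\ref{config-fig}. Two tools suffice. First, a quadrant-sign test: if for every $p\in\pi_p$, $q\in\pi_q$, $x\in\pi'$ the vectors $p-x$ and $q-x$ have matching sign patterns coordinate-wise, then $(p-x)\cdot(q-x)>0$, so $x\notin D(p,q)$ and $\pi'$ is entirely ruled out. Second, a pairwise-distance test: the closed cells of two surviving candidates often have their closest mutual points at Euclidean distance strictly more than $1$, and since any two points of $D(p,q)$ are at distance at most $|pq|\leqslant 1$, the two candidates cannot both be hit.

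The main obstacle is the bookkeeping in Statement~2: each configuration contributes its own list of up to six candidate $\times$-neighbors, and in each configuration the quadrant-sign test and the pairwise-distance test must be combined to bring the count down to two. Configuration~A (and its rotations) actually collapses to zero candidates once the quadrant-sign test is applied, but the diagonally separated configurations leave several survivors which must be eliminated pairwise. I expect the knight's-move configuration, which has no reflective symmetry about $pq$, to be the most delicate because the two tools must be applied asymmetrically to the $\times$-neighbors of $\pi_p$ and those of $\pi_q$.
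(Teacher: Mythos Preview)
Your plan is correct and tracks the paper's proof closely. Statement~1 is identical to the paper's argument (a point outside both neighborhoods is at distance more than $1/\sqrt{2}$ from each of $p$ and $q$, so Thales excludes it). For Statement~2 the paper also performs a case split over the four configurations A--D, and the two ``tools'' you isolate are exactly what it uses, just phrased at the level of grid \emph{points} rather than whole cells: the paper shows certain corners $c$ satisfy $\angle pcq$ acute (your quadrant-sign test specialised to $x=c$), and it shows certain corners are mutually at distance $\geqslant 1$ so $D(p,q)$ contains at most one (your pairwise-distance test specialised to corners). Working with corners is slightly sharper than your whole-cell sign test---in configurations B, C, D the sign pattern of $q-x$ is not constant over some candidate cells, so your test as stated will not eliminate them and you will have to fall back on the pairwise-distance test (which, after noting that the relevant \emph{open} cells are at distance $>1$ even when the closures touch at distance exactly $1$, does finish the job). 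The paper also mixes in a third device in configuration D, namely excluding corners $c$ with $|cq|>1\geqslant|pq|$; you may find you need the same trick there.

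Your Statement~3 argument (bounding the disk by an axis-aligned $1/\sqrt{2}\times 1/\sqrt{2}$ square, hence a $2\times 2$ block of cells) is a genuinely different and slightly cleaner route than the paper's, which instead reuses the configuration-A analysis and a distance comparison for configuration~B. Both are valid; yours avoids re-entering the case analysis.
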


\begin{proof}
We prove each statement separately.

\vspace{5pt}
{\em Statement 1.} Any cell $\pi$, that is outside the neighborhoods of $\pi_p$ and $\pi_q$, has distance more $1/\sqrt{2}$ from each of $\pi_p$ and $\pi_q$. Thus, for any point $r\in \pi$ we have $|rp|> 1/\sqrt{2}$ and $|rq|> 1/\sqrt{2}$. Since $|pq|\leqslant 1$, for any point $x$ in $D(p,q)$ it holds that either $|xp|\leqslant 1/\sqrt{2}$ or $|xq|\leqslant 1/\sqrt{2}$. Therefore, $r$ cannot be in $D(p,q)$. This implies that $D(p,q)$ does not intersect $\pi$. 

\vspace{5pt}
{\em Statement 2.} The relative positions of $\pi_p$ and $\pi_q$ is among the four configurations in Figure~\ref{config-fig}. In this figure, the cells of $X$ are colored blue, the $\times$-neighbors of $\pi_p$ and $\pi_q$ that are not in $X$ and not intersected by $D(p,q)$ are colored light gray, and the $\times$-neighbors of $\pi_p$ and $\pi_q$ that are not in $X$ but intersected by $D(p,q)$ are colored dark gray. We prove this statement for each  configuration.

\begin{itemize}
	\item Configuration A. By an application of Thales's theorem as in the proof of Lemma~\ref{Circle-Square-lemma}, one can verify that $D(p,q)$ does not contain any of grid points $c_1$, $c_2$, $c_3$, $c_4$, and hence does not intersect any cell outside $X$; see Figure~\ref{config-fig}(a).
	\item Configuration B. By Thales's theorem, $D(p,q)$ does not contain any of grid points $c_1$, $c_2$, $c_3$, $c_4$; see Figure~\ref{config-fig}(b). The mutual distances between grid points $c_5$, $c_6$, $c_7$, $c_8$ is at least $1$, and thus $D(p,q)$ contains at most one of them. With these constraints, it turns out that $D(p,q)$ intersects at most two cells outside $X$.
	\item Configuration C. By Thales's theorem, $D(p,q)$ does not contain any of grid points $c_1$, $c_2$, $c_3$, $c_4$; see Figure~\ref{config-fig}(c). In this setting, $D(p,q)$ intersects at most two cells outside $X$.
	\item Configuration D. See Figure~\ref{config-fig}(d). Since $|c_1q|>1/\sqrt{2}$ and $|c_1p|>1/\sqrt{2}$, by an argument similar to the proof of statement 1, one can verify that $D(p,q)$ does not contain $c_1$; by symmetry it also does not contain $c_2$. Since the distance between $q$ and each of $c_3$, $c_5$, $c_6$ is more than $1$, $D(p,q)$ does not contain any of $c_3$, $c_5$, and $c_6$; by symmetry it also does not contain any of $c_4$, $c_7$, and $c_8$. With these constraints, it turns out that $D(p,q)$ intersects at most one cell outside $X$. 
\end{itemize} 

{\em Statement 3.} Since $|pq|\leqslant 1/\sqrt{2}$, the cells $\pi_p$ and $\pi_q$ are neighbors, and thus their relative positions is among configurations A and B in Figure~\ref{config-fig}. We have seen in the proof of statement 2 that in configuration A the disk $D(p,q)$ does not intersect any disk outside $X$. 
We prove our claim for configuration B. Every cell $\pi_r$ outside $X$ is at a distance more than $1/\sqrt{2}$ from $\pi_p$ or $\pi_q$; see Figure~\ref{config-fig}(b). Since the diameter of $D(p,q)$ is at most $1/\sqrt{2}$, this implies that $\pi_r$ lies outside $D(p,q)$.
\end{proof}

\begin{figure}[htb]
	\centering
	\setlength{\tabcolsep}{0in}
	$\begin{tabular}{cc}
	\multicolumn{1}{m{.5\columnwidth}}{\centering\includegraphics[width=.3\columnwidth]{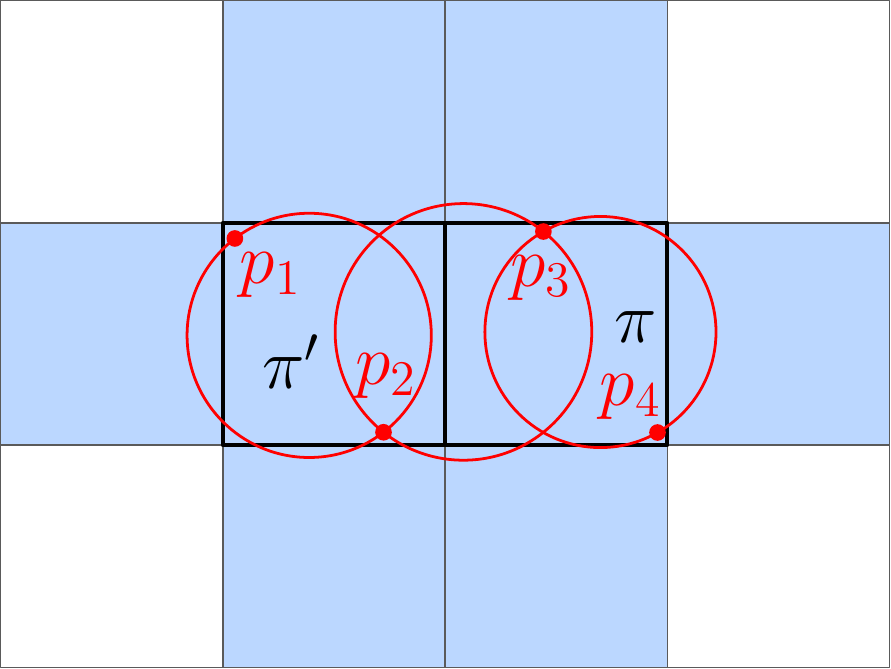}}
	&\multicolumn{1}{m{.5\columnwidth}}{\centering\includegraphics[width=.3\columnwidth]{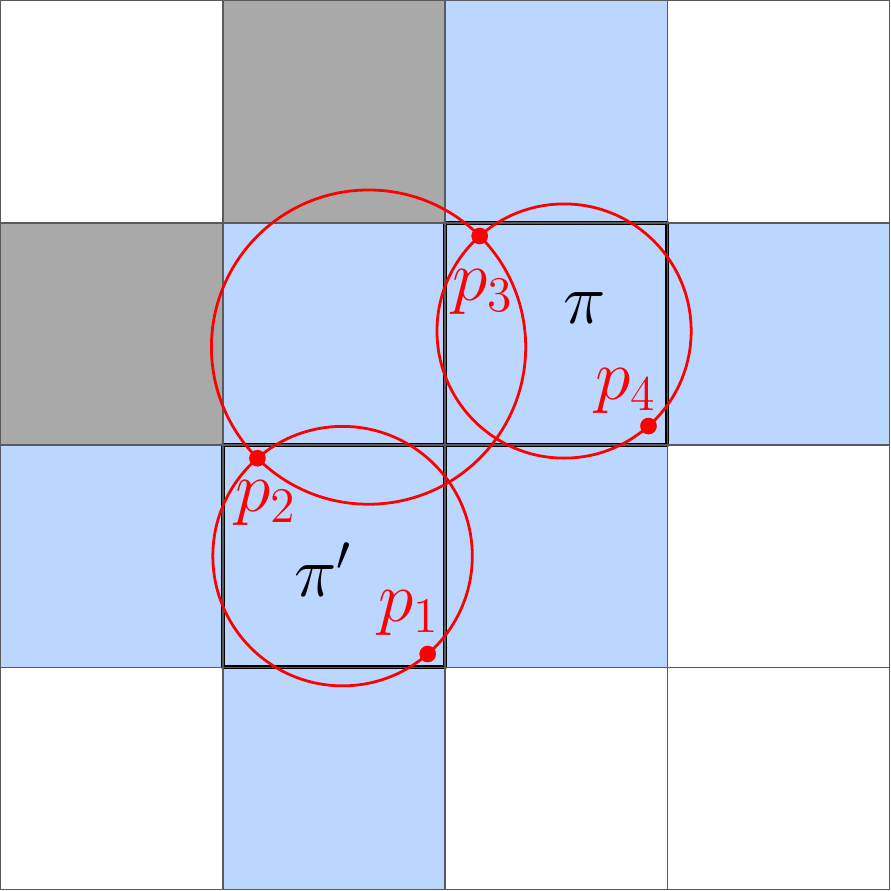}}\\
	(a)&(b)\\
	\multicolumn{1}{m{.5\columnwidth}}{\centering\includegraphics[width=.35\columnwidth]{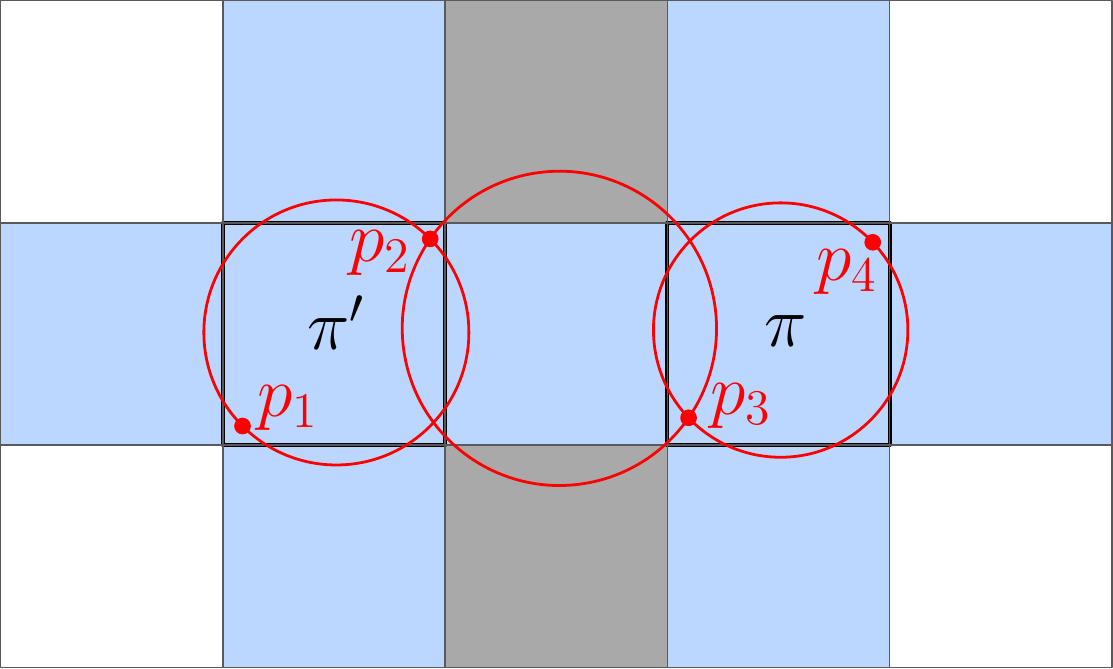}}
	&\multicolumn{1}{m{.5\columnwidth}}{\centering\includegraphics[width=.35\columnwidth]{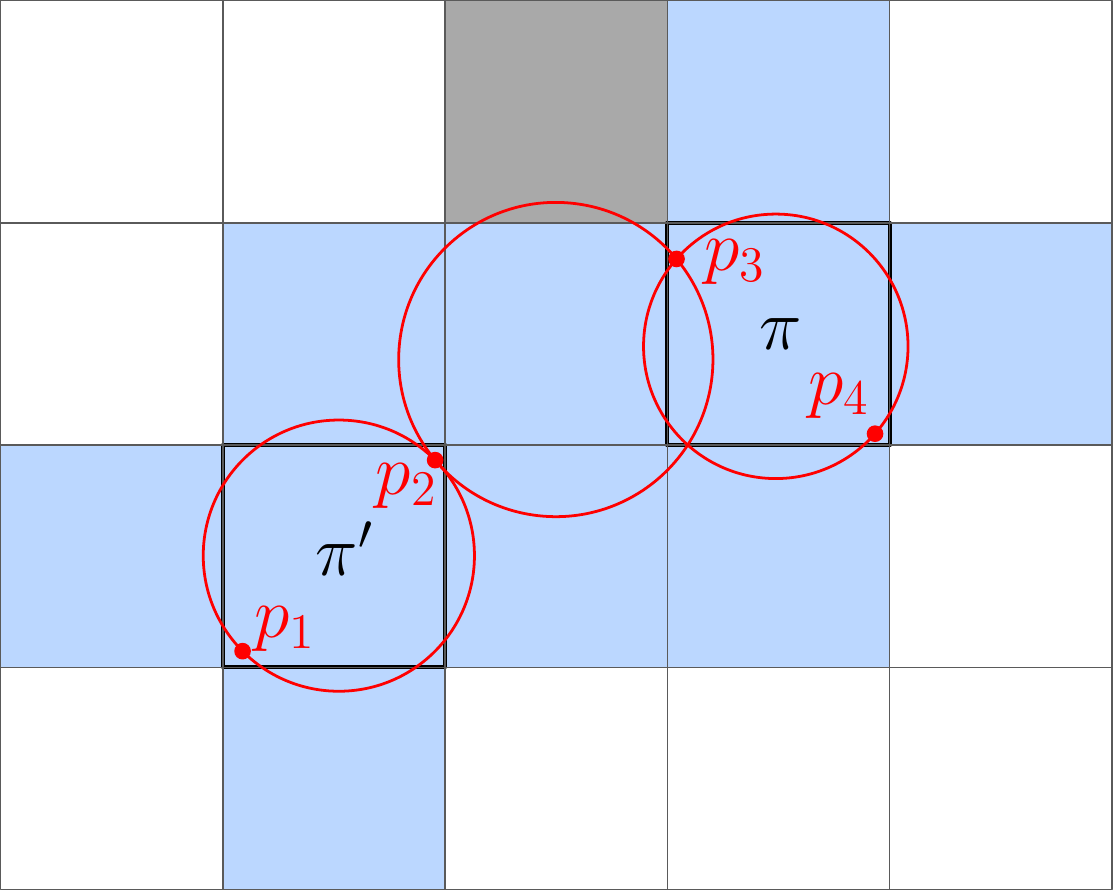}}\\
	(c)&(d)
	\end{tabular}$
	\caption{Illustration of the cells that are intersected by $D(p_1,p_2)\cup D(p_2,p_3)\cup D(p_3,p_4)$.}
	\label{neighbor-cell-fig}
\end{figure}

\begin{lemma}
	\label{neighbor-cell-lemma}
	Consider two cells $\pi$ and $\pi'$. Let $p_1$ and $p_2$ be any two points in $\pi'$, and let $p_3$ and $p_4$ be any two points in $\pi$. Let $\mathcal{D}$ be the union of three disks $D(p_1,p_2)$, $D(p_2,p_3)$, and $D(p_3,p_4)$.
	Then the following statements hold:
	\begin{enumerate}
		\item[1.] If $|p_2p_3|\leqslant 1/\sqrt{2}$ then $\mathcal{D}$ intersects at most $8$ cells.
		\item[2.] If $|p_2p_3|\leqslant 1$, and $\pi$ and $\pi'$ are $+$-neighbors, then $\mathcal{D}$ intersects at most $8$ cells.
		\item[3.] If $|p_2p_3|\leqslant 1$, and $\pi$ and $\pi'$ are $\times$-neighbors, then $\mathcal{D}$ intersects at most $10$ cells.
		\item[4.] If $|p_2p_3|\leqslant 1$, and $\pi$ and $\pi'$ are not neighbors, then $\mathcal{D}$ intersects at most $11$ cells.
	\end{enumerate}
\end{lemma}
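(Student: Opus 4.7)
The plan is to decompose $\mathcal{D}$ into its three constituent diametral disks and bound each separately, then combine while tracking overlaps. Since $p_1,p_2$ both lie in $\pi'$, Lemma~\ref{Circle-Square-lemma} statement 2 says $D(p_1,p_2)$ intersects only $\pi'$ and its four $+$-neighbors; call this collection $X_1$ (so $|X_1|\le 5$). Analogously, $D(p_3,p_4)$ intersects only the set $X_2$ consisting of $\pi$ and its four $+$-neighbors. Set $X=X_1\cup X_2$. The entire contribution of $D(p_1,p_2)\cup D(p_3,p_4)$ to the cell count is already inside $X$, so the problem reduces to computing $|X|$ (which depends on the relative position of $\pi$ and $\pi'$) and then adding the number of cells outside $X$ that $D(p_2,p_3)$ can intersect.

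For $D(p_2,p_3)$ I would apply Lemma~\ref{two-cell-lemma} with $(p,q):=(p_2,p_3)$; crucially, the ``$X$'' in that lemma is exactly the set $X$ defined above, so its three statements translate directly into bounds on the extra cells. Statement~1 follows because $|p_2p_3|\le 1/\sqrt 2$ forces $\pi$ and $\pi'$ to coincide or be neighbors, and Lemma~\ref{two-cell-lemma} statement~3 then gives no cell outside $X$; a direct count of $|X|$ in each sub-case (same cell, $+$-neighbor, $\times$-neighbor) yields $|X|\le 8$. For statement~2, the $+$-neighbor case corresponds to configuration~A of Figure~\ref{config-fig}, where the configuration-A analysis inside the proof of Lemma~\ref{two-cell-lemma} shows $D(p_2,p_3)$ adds no cell outside $X$, while $X_1$ and $X_2$ share exactly two cells (namely $\pi$, $\pi'$), giving $|X|=5+5-2=8$. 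For statement~3, the $\times$-neighbor case is configuration~B, where up to two extra cells are added to $|X|=8$ (again two shared cells on the common diagonal), giving at most $10$.

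Statement~4 splits further according to whether the non-neighbor pair $(\pi,\pi')$ is in configuration~C (two apart along a row or column) or configuration~D (a knight's-move offset). In configuration~C the $+$-neighborhoods of $\pi$ and $\pi'$ share exactly the intermediate cell, so $|X|=9$; the configuration-C portion of Lemma~\ref{two-cell-lemma} allows up to two extra cells, totaling $11$. In configuration~D the $+$-neighborhoods $X_1$ and $X_2$ are disjoint, so $|X|=10$, but the configuration-D portion of Lemma~\ref{two-cell-lemma} allows only one extra cell, again totaling $11$.

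The main obstacle is purely bookkeeping: verifying how many cells are shared by $X_1$ and $X_2$ in each configuration of Figure~\ref{config-fig}. Since the four configurations are finite and explicit, this amounts to a routine grid inspection and uses no new geometric tool beyond Lemma~\ref{Circle-Square-lemma} and Lemma~\ref{two-cell-lemma}; the only subtlety is remembering that in statement~1 one must also handle the sub-case $\pi=\pi'$ separately, where $X_1=X_2$ and $|X|=5\le 8$.
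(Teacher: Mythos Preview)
Your proposal is correct and follows essentially the same approach as the paper's own proof: both decompose $\mathcal{D}$ into the three disks, use Lemma~\ref{Circle-Square-lemma} to confine $D(p_1,p_2)$ and $D(p_3,p_4)$ to the set $X$ of $\pi$, $\pi'$, and their $+$-neighbors, and then invoke the configuration-by-configuration analysis inside Lemma~\ref{two-cell-lemma} to bound how many cells outside $X$ the disk $D(p_2,p_3)$ may hit. Your bookkeeping of $|X|$ via $|X_1|+|X_2|-|X_1\cap X_2|$ matches the paper's counts ($8,8,8,9,10$ across the relevant configurations), and the extra $\pi=\pi'$ sub-case you mention is harmless (the lemma is only applied with distinct cells).
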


\begin{proof}
We define $X$, as in Lemma~\ref{two-cell-lemma}, to be the set containing the cells $\pi$ and $\pi'$ and their $+$-neighbors (where $\pi$ and $\pi'$ play the roles of $\pi_p$ and $\pi_q$).
By Lemma~\ref{Circle-Square-lemma} the disk $D(p_1,p_2)$ can intersect only $\pi'$ and its four $+$-neighbors, and the disk $D(p_3,p_4)$ can intersect only $\pi$ and its four $+$-neighbors. Thus $D(p_1,p_2)$ and $D(p_3,p_4)$ can intersect only cells in $X$. Now we verify each statement. 

\begin{itemize}
	\item Statement 1. Since $|p_2p_3|\leqslant 1/\sqrt{2}$, the cells $\pi$ and $\pi'$ are neighbors, and thus their relative positions is among configurations A and B in Figures~\ref{config-fig}(a) and \ref{config-fig}(b). In each of these configurations the set $X$ contains $8$ cells. By Lemma~\ref{two-cell-lemma}, $D(p_2,p_3)$ does not intersect any cell outside $X$. Therefore, the union of the three disks, i.e. $\mathcal{D}$, intersects at most $8$ cells.  
	\item Statement 2. Since $\pi$ and $\pi'$ are $+$-neighbors, their relative position is configuration A. In this configuration, $X$ contains $8$ cells, and --- as we have seen in the proof of statement 2 of Lemma~\ref{two-cell-lemma} --- the disk $D(p_2,p_3)$ does not intersect any cell outside $X$. Therefore, $\mathcal{D}$ intersects at most $8$ cells; see Figure~\ref{neighbor-cell-fig}(a).  
	\item Statement 3. The relative position of $\pi$ and $\pi'$ is configuration B. In this configuration, $X$ contains $8$ cells, and --- as we have seen in the proof of statement 2 of Lemma~\ref{two-cell-lemma} --- the disk $D(p_2,p_3)$ intersects at most two cells outside $X$. Therefore, $\mathcal{D}$ intersects at most $10$ cells ($8$ cells in $X$ and 2 cells outside $X$); see Figure~\ref{neighbor-cell-fig}(b). 
	\item Statement 4. The relative positions of $\pi$ and $\pi'$ is among configurations C and D. In configuration C, $X$ contains $9$ cells, and by the proof of Lemma~\ref{two-cell-lemma} the disk $D(p_2,p_3)$ intersects at most two cells outside $X$. Therefore, $\mathcal{D}$ intersects at most $11$ cells; see Figure~\ref{neighbor-cell-fig}(c). In configuration D, $X$ contains $10$ cells, and by the proof of Lemma~\ref{two-cell-lemma} (statement 2, configuration D) the disk $D(p_2,p_3)$ intersects at most one cell outside $X$. Therefore, $\mathcal{D}$ intersects at most $11$ cells; see Figure~\ref{neighbor-cell-fig}(d).\qedhere 
\end{itemize}
\end{proof}

\subsubsection{Analysis of Hop Stretch Factor}
\label{analysis-section}
With lemmas in the previous section, we have all tools for proving the hop stretch factor of $H$. Recall that no point of $P$ lies on a grid line of $\Gamma$, and thus every point of $P$ is in the interior of some square of $\Gamma$. Consider any edge $(u,v)\in \UDG{P}$, and notice that $|uv|\leqslant 1$. In this section we prove the existence of a path, of length at most $341$, between $u$ and $v$ in $H$. 
Depending on whether $u$ or $v$ belong to $S$, we have three cases: (1) $u\notin S$ and $v\notin S$, (2) $u\in S$ and $v\in S$, and (3) $u\notin S$ and $v\in S$, or vice versa. These cases are treated using similar arguments. We give a detailed description of case (1) which gives rise to the worst stretch factor for our algorithm. We give a brief description of other cases at the end of this section. We denote by ``\pth{p}{q}'' a simple path between two points $p$ and $q$.

\vspace{8pt}
{\noindent\bf Case (1):} In this case $u,v\in P\setminus S$. Recall that, in $H$, $u$ and $v$ are connected to their closest visible vertices of $\DT{S}$; let $u_1$ and $v_1$ denote these vertices respectively. Therefore, in $H$, there is a \pth{u}{v} that consists of the edge $(u,u_1)$, a \pth{u_1}{v_1} in $\DT{S}$, and the edge $(v_1,v)$; see Figure~\ref{case3-fig}-top. In the following description we prove the existence of a \pth{u_1}{v_1} in $\DT{S}$ of desired length. By Lemma~\ref{unselected-length-lemma} we have $|uu_1|\leqslant 1/\sqrt{2}$ and $|vv_1|\leqslant 1/\sqrt{2}$; we will use these inequalities in our description. 

Let $\pi_u$, $\pi_v$, $\pi'_u$ and $\pi'_v$ denote the cells containing $u$, $v$, $u_1$ and $v_1$ respectively. Depending on the identicality of these cells we can have --- up to symmetry --- the following five sub-cases: (i) $\pi_u=\pi_v$ or (ii) $\pi'_u=\pi_u$ or (iii) $\pi'_u=\pi_v$ or (iv) $\pi'_u=\pi'_v$ or (v) all four cells are pairwise distinct.
These sub-cases are treated using similar arguments. We give a detailed description of sub-case (v) which gives rise to the worst stretch factor for our algorithm. We give a brief description of other sub-cases at the end of this section..  

Assume that $\pi_u$, $\pi_v$, $\pi'_u$ and $\pi'_v$ are pairwise distinct. Since $|uu_1|\leqslant 1/\sqrt{2}$, $\pi_u$ and $\pi'_u$ are neighbors; similarly $\pi_v$ and $\pi'_v$ are neighbors.
Since $(u,u_1)\in\UDG{P}$, by properties (P2) and (P3) in Lemma~\ref{S-lemma} there exist two points $u_2,u_3\in S$ such that $u_2\in \pi'_u$, $u_3\in \pi_u$, and $|u_2u_3|\leqslant 1$. Similarly, there exist two points $v_2,v_3\in S$ such that $v_2\in \pi'_v$, $v_3\in \pi_v$, and $|v_2v_3|\leqslant 1$. Moreover, since $(u,v)\in\UDG{P}$, there exist two points $u_4,v_4\in S$ such that $u_4\in \pi_u$, $v_4\in \pi_v$, and $|u_4v_4|\leqslant 1$. See Figure~\ref{case3-fig}. It might be the case that $u_1=u_2$, $u_3=u_4$, $v_3=v_4$, or $v_1=v_2$. Since $u_1$ and $u_2$ are in the same cell, $|u_1u_2|\leqslant 1$; similarly $|u_3u_4|\leqslant 1$, $|v_1v_2|\leqslant 1$, and $|v_3v_4|\leqslant 1$.
Having these distance constraints, Corollary~\ref{Delaunay-cor} implies that in $\DT{S}$ there exists a walk between $u_1$ and $v_1$ that consists of a \pth{u_1}{u_2} in $D(u_1,u_2)$, a \pth{u_2}{u_3} in $D(u_2,u_3)$, a \pth{u_3}{u_4} in $D(u_3,u_4)$, a \pth{u_4}{v_4} in $D(u_4,v_4)$, a \pth{v_4}{v_3} in $D(v_4,v_3)$, a \pth{v_3}{v_2} in $D(v_3,v_2)$, and a \pth{v_2}{v_1} in $D(v_2,v_1)$. Thus, there is a \pth{u_1}{v_1} in $\DT{S}$ that lies in the union of these seven disks; see Figure~\ref{case3-fig}. 

Let $\mathcal{D}$ denote the union of the seven disks. We want to obtain an upper bound on the number of cells intersected by $\mathcal{D}$. To that end, set $\mathcal{D}_u=D(u_1,u_2)\cup D(u_2,u_3)\cup D(u_3,u_4)$, and $\mathcal{D}_v=D(v_4,v_3)\cup D(v_3,v_2)\cup D(v_2,v_1)$. 
Define $X_u$ as the set containing the cells $\pi_u$ and $\pi'_u$ and their $+$-neighbors. Since $\pi_u$ and $\pi'_u$ are neighbors, their relative positions is among configurations A and B (Figures~\ref{config-fig}(a) and \ref{config-fig}(b)); in these configurations $X_u$ contains $8$ cells. Analogously, define $X_v$ with respect to $\pi_v$ and $\pi'_v$, and notice that $X_v$ also contains $8$ cells.

\begin{claim*}
	{\em Each of $\mathcal{D}_u$ and $\mathcal{D}_v$ intersects at most $8$ cells. Moreover, the cells that are intersected by $\mathcal{D}_u$ and $\mathcal{D}_v$ belong to $X_u$ and $X_v$, respectively.} 
\end{claim*}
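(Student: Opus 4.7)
The plan is to invoke Lemma~\ref{neighbor-cell-lemma} on $\mathcal{D}_u$ (and, by a verbatim-symmetric argument, on $\mathcal{D}_v$) with the substitution $(p_1,p_2,p_3,p_4)=(u_1,u_2,u_3,u_4)$ and $(\pi',\pi)=(\pi'_u,\pi_u)$. Since $|uu_1|\leqslant 1/\sqrt{2}$ by Lemma~\ref{unselected-length-lemma}, the cells $\pi_u$ and $\pi'_u$ are neighbors, so their relative position is configuration A or B (Figure~\ref{config-fig}(a)--(b)); in both cases $X_u$ contains exactly $8$ cells. Thus it suffices to upgrade the cardinality bound of Lemma~\ref{neighbor-cell-lemma} to the sharper qualitative statement that every cell hit by $\mathcal{D}_u$ lies in $X_u$, and the count $\leqslant 8$ follows automatically.

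For the two ``monochromatic'' disks $D(u_1,u_2)$ and $D(u_3,u_4)$ there is nothing to do: each has both endpoints in a single cell, so Lemma~\ref{Circle-Square-lemma}(2) already confines it to that cell together with its four $+$-neighbors, all of which lie in $X_u$. The only place where $\mathcal{D}_u$ might leak outside $X_u$ is therefore the middle disk $D(u_2,u_3)$.

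The analysis of $D(u_2,u_3)$ splits on the relative position of $\pi_u$ and $\pi'_u$. In configuration A ($+$-neighbors) the argument used in the proof of Lemma~\ref{two-cell-lemma}(2) already shows that $D(u_2,u_3)$ intersects no cell outside $X_u$, and we are done. Configuration B ($\times$-neighbors) is the main obstacle, since the generic bound of Lemma~\ref{two-cell-lemma}(2) permits up to two cells of intrusion outside $X_u$, yielding only a $10$-cell bound. The key step is to invoke property~(P3) of Lemma~\ref{S-lemma}: because $\pi_u$ and $\pi'_u$ are $\times$-neighbors (so the index linking them lies in $\{5,\dots,20\}$) and $(u,u_1)\in\UDG{P}$ is an edge between these two cells, the pair $(u_2,u_3)$ is by construction the \emph{shortest} edge of $\UDG{P}$ running between $\pi_u$ and $\pi'_u$. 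Hence $|u_2u_3|\leqslant |uu_1|\leqslant 1/\sqrt{2}$, and Lemma~\ref{two-cell-lemma}(3) (equivalently, statement~1 of Lemma~\ref{neighbor-cell-lemma}) confines $D(u_2,u_3)$ entirely to $X_u$.

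Putting the three disks together gives that $\mathcal{D}_u$ intersects only cells of $X_u$, and in particular at most $8$ cells; $\mathcal{D}_v$ is handled identically with $v$ in place of $u$. The heart of the argument — and the only non-routine step — is spotting that property~(P3) was engineered precisely to supply the $1/\sqrt{2}$ bound needed to rule out the two stray cells that a generic configuration B would otherwise contribute.
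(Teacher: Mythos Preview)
Your proof is correct and follows essentially the same approach as the paper: split on whether $\pi_u$ and $\pi'_u$ are $+$-neighbors or $\times$-neighbors, and in the latter case invoke property~(P3) together with $|uu_1|\leqslant 1/\sqrt{2}$ to force $|u_2u_3|\leqslant 1/\sqrt{2}$. The only cosmetic difference is that the paper cites statements~1 and~2 of Lemma~\ref{neighbor-cell-lemma} directly (and then appeals to their proofs for the containment in $X_u$), whereas you unpack the three disks individually via Lemma~\ref{Circle-Square-lemma}(2) and Lemma~\ref{two-cell-lemma}; the underlying logic is identical.
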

\begin{proof}
	Because of symmetry, we prove this claim only for $\mathcal{D}_u$. Recall that $\pi_u$ and $\pi'_u$ are neighbors. If $\pi_u$ and $\pi'_u$ are $+$-neighbors, then $\mathcal{D}_u$ intersects at most $8$ cells by statement 2 in Lemma~\ref{neighbor-cell-lemma}. The proof of statement 2 also implies that these (at most $8$) cells belong to $X_u$. If $\pi_u$ and $\pi'_u$ are $\times$-neighbors, then by property (P3) in Lemma~\ref{S-lemma}, $(u_2,u_3)$ is the shortest edge of $\UDG{P}$ that runs between $\pi'_u$ and $\pi_u$. Since $(u_1,u)$ is also an edge between $\pi'_u$ and $\pi_u$, we have $|u_2u_3|\leqslant |u_1u|\leqslant 1/\sqrt{2}$. In this case $\mathcal{D}_u$ intersects at most $8$ cells by statement 1 in Lemma~\ref{neighbor-cell-lemma}. The proof of statement 1 implies that these cells belong to $X_u$.
\end{proof}

Notice that $\mathcal{D}=\mathcal{D}_u\cup \mathcal{D}_v\cup D(u_4,v_4)$. Based on this and the above claim, in order to obtain an upper bound on the number of cells that are intersected by $\mathcal{D}$ it suffices to obtain an upper bound on the number of cells, outside $X_u\cup X_v$, that are intersected by $D(u_4,v_4)$. To that end, define $X$ as the set containing the cells $\pi_u$ and $\pi_v$ and their $+$-neighbors, and notice that $X\subseteq X_u\cup X_v$. By Lemma~\ref{two-cell-lemma}, the disk $D(u_4,v_4)$ intersects at most $2$ cells outside $X$, and hence at most $2$ cells outside $X_u\cup X_v$. Therefore, the number of cells intersected by $\mathcal{D}$ is at most $|X_u\cup X_v|+2\leqslant 8+8+2=18$. Since by property (P1) in Lemma~\ref{S-lemma} each cell contains at most $20$ points of $S$, the set $\mathcal{D}$ contains at most $360$ points of $S$. Therefore, the \pth{u_1}{v_1} in $\DT{S}$ has at most $360$ vertices, and hence at most $359$ edges. Thus, the \pth{u}{v} in $H$ has at most $361$ edges (including $(u,u_1)$ and $(v_1,v)$). 

With a closer look at relative positions of $\pi_u$ and $\pi_v$ we show that $\mathcal{D}$ in fact intersects at most $17$ cells. This would imply that the \pth{u}{v} has at most $341$ edges as claimed. To that end we consider four configurations A, B, C, and D for $\pi_u$ and $\pi_v$, which we refer to them as sub-cases (v)-A, (v)-B, (v)-C, and (v)-D, respectively.

\begin{itemize}
	\item (v)-A. In this case $X_u$ and $X_v$ share $\pi_u$ and $\pi_v$ and thus $|X_u\cup X_v|\leqslant 14$. Moreover, by the proof of Lemma~\ref{two-cell-lemma} the disk $D(u_4,v_4)$ does not intersect any cell outside $X_u\cup X_v$; see Figure~\ref{config-fig}(a). Thus $\mathcal{D}$ intersects at most $14$ cells.
	\item (v)-B. In this case $X_u$ and $X_v$ share at least two cells (two $+$-neighbors of $\pi_u$ and $\pi_v$) and thus $|X_u\cup X_v|\leqslant 14$. Moreover, by the proof of Lemma~\ref{two-cell-lemma} the disk $D(u_4,v_4)$ intersects at most two cells outside $X_u\cup X_v$; see Figure~\ref{config-fig}(b). Thus $\mathcal{D}$ intersects at most $16$ cells (the shaded cells in Figure~\ref{case3-fig}-top). 
	\item (v)-C. In this case $X_u$ and $X_v$ share at least one cell (one $+$-neighbor of $\pi_u$ and $\pi_v$), and by the proof of Lemma~\ref{two-cell-lemma} the disk $D(u_4,v_4)$ intersects at most two cells outside $X_u\cup X_v$; see Figure~\ref{config-fig}(c). Thus $\mathcal{D}$ intersects at most $17$ cells (the shaded cells in Figure~\ref{case3-fig}-middle).  
	\item (v)-D. In this case $X_u$ and $X_v$ may not share any cell, but by the proof of Lemma~\ref{two-cell-lemma} the disk $D(u_4,v_4)$ intersects at most one cell outside $X_u\cup X_v$; see Figure~\ref{config-fig}(d). Thus $\mathcal{D}$ intersects at most $17$ cells (the shaded cells in Figure~\ref{case3-fig}-bottom).  
\end{itemize}

Even though $\mathcal{D}$ may intersect exactly $17$ cells, at the end of this section we show a possibly of decreasing the upper bound on the length of the \pth{u}{v} even further. This will require more case analysis which we avoid.

\begin{figure}
	\centering
	$\begin{tabular}{c}
	\multicolumn{1}{m{.9\columnwidth}}{\centering\includegraphics[width=.5\columnwidth]{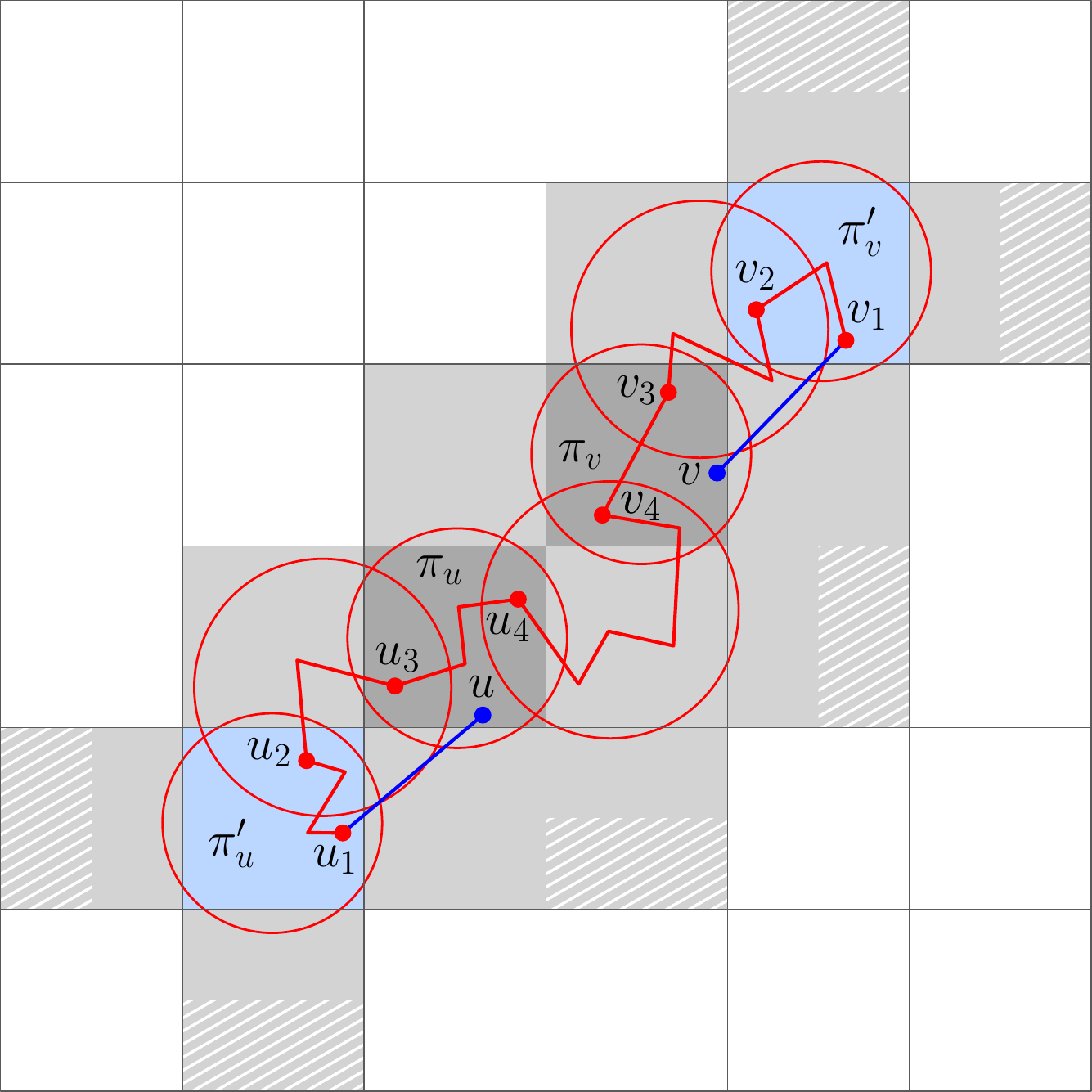}}\\
	\multicolumn{1}{m{.9\columnwidth}}{\centering\includegraphics[width=.6\columnwidth]{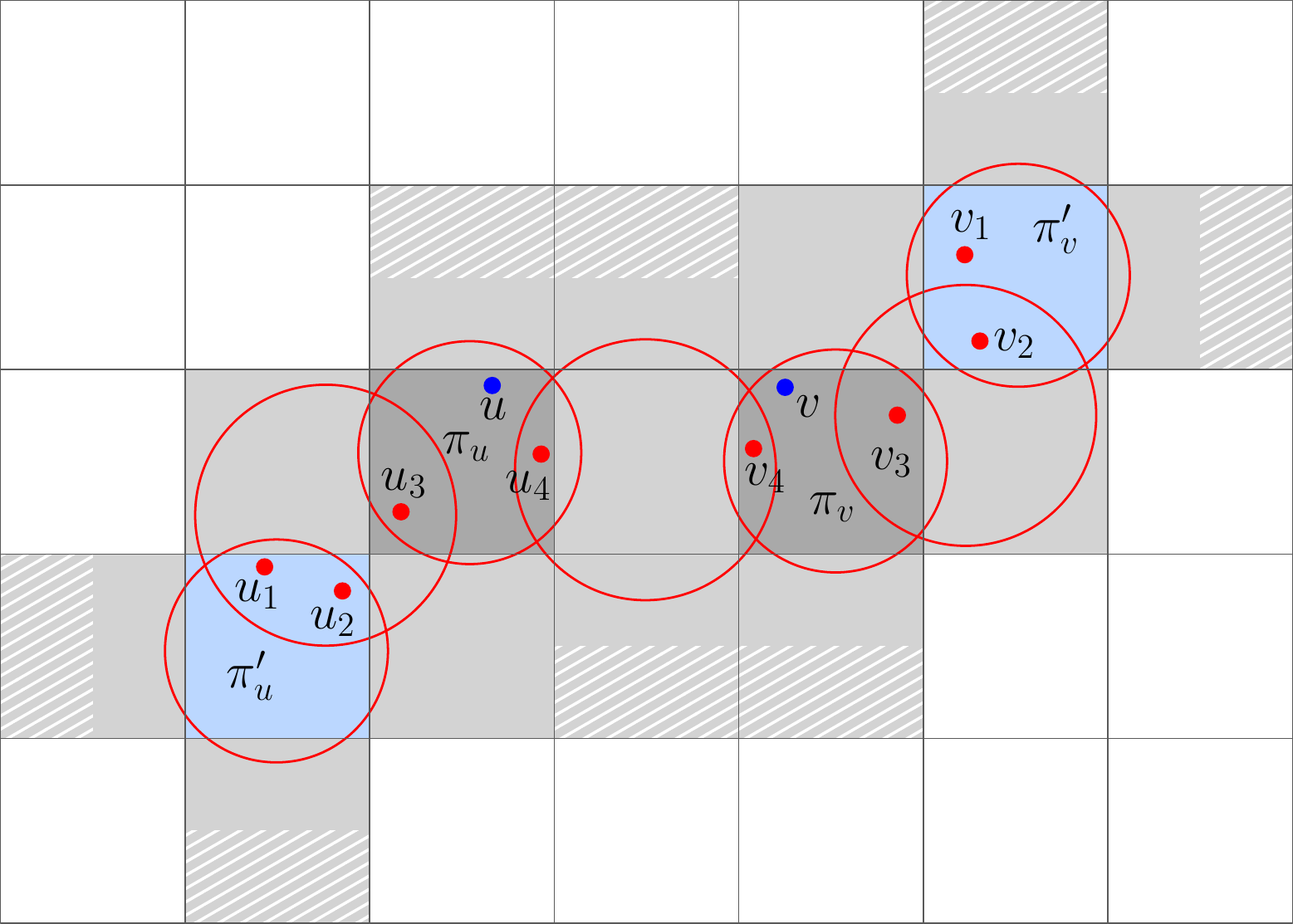}}\\ 
	\multicolumn{1}{m{.9\columnwidth}}{\centering\includegraphics[width=.55\columnwidth]{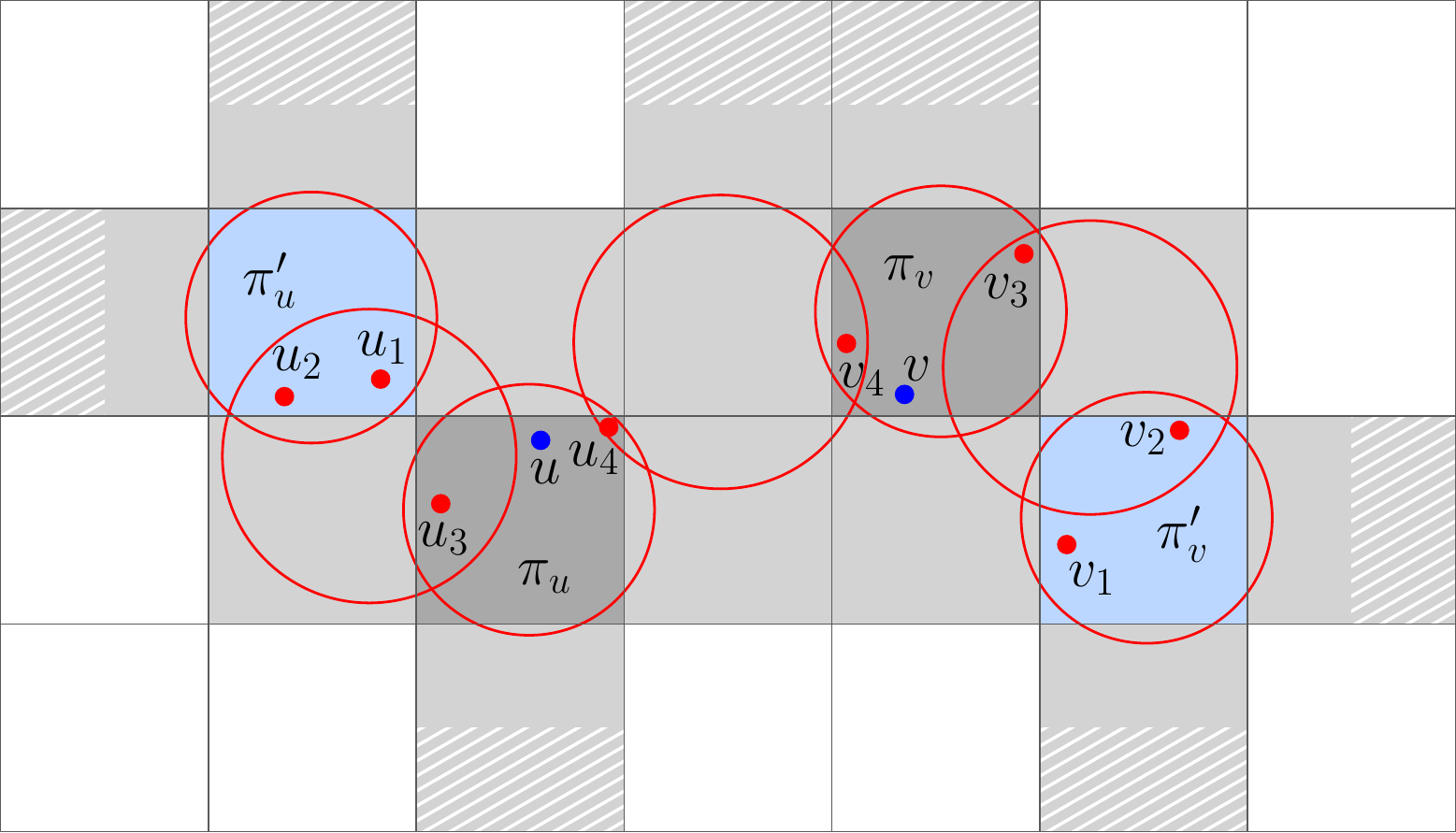}}
	\end{tabular}$
	\caption{Points $u, v,u_1,v_1$ belong to distinct cells. The cells $\pi_u$, $\pi_v$ are in configuration B (top), C (middle), and D (bottom). The red path in top figure corresponds to a \pth{u_1}{v_1} in $\DT{S}$.} 
	\label{case3-fig}
\end{figure}

\vspace{8pt}
{\noindent\bf Other Cases and Sub-Cases:}
We gave a detailed analysis for sub-case (v) (of case (1)) where $u$, $v$, $u_1$, $v_1$ lie in distinct cells $\pi_u$, $\pi_v$, $\pi'_u$, $\pi'_v$. Our analysis shows the existence of a \pth{u_1}{u_4} in $\mathcal{D}_u$ which intersects at most $8$ cells (which belong to $X_u$), and the existence of a \pth{v_1}{v_4} in $\mathcal{D}_v$ which  intersects at most $8$ cells (which belong to $X_v$). In the sequel we give short descriptions of case (2), case (3), and remaining sub-cases of case (1).    

Recall case (1) where $u,v\in P\setminus S$. In sub-case (i) where $\pi_u=\pi_v$, the sets $X_u$ and $X_v$ share at least $5$ cells ($\pi_u$ and its four $+$-neighbors). Therefore, $\mathcal{D}$ intersects at most $|X_u\cup X_v|+2\leqslant 8+8-5+2=13$ cells. Similarly, in each of sub-cases (iii) where $\pi'_u=\pi_v$ and (iv) where $\pi'_u=\pi'_v$, the sets $X_u$ and $X_v$ share at least $5$ cells, and thus $\mathcal{D}$ intersects at most $13$ cells.
In sub-case (ii) where $\pi'_u=\pi_u$, the set $X_u$ contains $5$ cells ($\pi_u$ and its four $+$-neighbors), and thus $\mathcal{D}$ intersects at most $5+8+2=15$ cells. Thus, in all these remaining sub-cases, the \pth{u}{v} has at most $301$ edges (including $(u, u_1)$ and $(v_1, v)$).

Now consider case (3) where $v\in S$ or $u\in S$ but not both. By symmetry we assume that $v\in S$, and thus $u\in P\setminus S$. In this case we do not have the point $v_1$ nor the cell $\pi'_v$; one may assume that $v_4=v_3=v_2$. Thus, $X_v$ contains at most $5$ cells ($\pi_v$ and its $+$-neighbors). By an argument similar to that of case (1), there exists a \pth{u_1}{v} in $\DT{S}$ that lies in $\mathcal{D}$ which intersects at most $|X_u\cup X_v|+2\leqslant 8+5+2=15$ cells. Therefore, there is a \pth{u}{v} in $H$ that has at most $300$ edges (including the edge $(u,u_1)$). 

Consider case (2) where $u,v\in S$. Since $|uv|\leqslant 1$, by Corollary~\ref{Delaunay-cor} there is a \pth{u}{v} in $\DT{S}$ that lies in $D(u,v)$. By Lemma~\ref{Circle-Square-lemma}, $D(u,v)$ intersects at most seven cells, and thus contains at most $140$ points of $S$. Therefore, the \pth{u}{v} has at most $139$ edges.

\vspace{8pt}
{\noindent\bf Further Improvement of Hop Stretch Factor:}
Recall the set $\mathcal{D}$, from Section~\ref{analysis-section}, which intersects at most $17$ cells, and hence contains at most $340$ points of $S$; this implied that the length of the \pth{u}{v} in $H$ is at most $341$. In this section we show a possibility of how one could improve the upper bound on the number of points in $\mathcal{D}$ to $319$; this would decrease the upper bound on the length of the \pth{u}{v} to $320$. However, to show this, one requires to go through some case analysis, which we avoid in this paper.

Recall that in sub-cases (v)-B, (v)-C, and (v)-D the set $\mathcal{D}$ intersects at most $16$, $17$, and $17$ cells respectively; see Figure~\ref{case3-fig} for an illustration of these sub-cases. In all other cases and sub-cases, $\mathcal{D}$ intersects at most $15$ cells, and hence contains at most $300$ points of $S$. Thus, it suffices to show, only for sub-cases (v)-B, (v)-C, (v)-D, that $\mathcal{D}$ contains at most $319$ points of $S$. This involves some case analysis which we provide an overview of that. 

For each cell $\pi$, let $\pi_L$ and $\pi_R$ be left and right rectangles obtained by bisecting $\pi$ with a vertical line, i.e., $\pi_L=\pi_{NW}\cup\pi_{SW}$ and $\pi_R=\pi_{NE}\cup \pi_{SE}$; see Figure~\ref{grid-fig}. Recall points $s_i$ in Lemma~\ref{S-lemma}, i.e., the points of $S$ that lie in $\pi$. All points $s_8,s_{12},s_{14}$ lie in $\pi_L$, and all points $s_{11},s_{13},s_{17}$ lie in $\pi_R$. Consider a similar partitioning of $\pi$ into top and bottom rectangles $\pi_T$ and $\pi_B$, and observe that all points $s_5,s_6,s_7$ lie in $\pi_T$ and all points $s_{18},s_{19},s_{20}$ lie in $\pi_B$. We refer to $\pi_L$, $\pi_R$, $\pi_T$, $\pi_B$ by ``rectangles''.
The tiled regions in Figure~\ref{case3-fig} correspond to these rectangles (from different cells). These rectangles (tiled regions in Figure~\ref{case3-fig}) are not intersected by $\mathcal{D}$. This and the fact that each rectangle counts for three points, imply that for each rectangle we could subtract $3$ from the number of points in $\mathcal{D}$. In Figures~\ref{case3-fig}-top, \ref{case3-fig}-middle, \ref{case3-fig}-bottom, which correspond to sub-cases (v)-B, (v)-C, (v)-D, the number of these rectangles is $6$, $8$, and $7$, respectively.

Consider any of sub-cases (v)-B, (v)-C and (v)-D. Fix the relative position of $\pi_u$ and $\pi_v$. By checking all possible configurations of neighboring cells $\pi'_u$ and $\pi'_v$ around $\pi_u$ and $\pi_v$, we always get either (i) at most $15$ cells that are intersected by $\mathcal{D}$, or (ii) exactly $16$ cells that are intersected by $\mathcal{D}$ and at least $1$ rectangle that is not intersected by $\mathcal{D}$, or (iii) exactly $17$ cells that are intersected by $\mathcal{D}$ and at least $7$ rectangle that are not intersected by $\mathcal{D}$. In case (i), the set $\mathcal{D}$ contains at most $300$ points of $S$. Since each rectangle counts for three points of $S$, in cases (ii) and (iii), the set $\mathcal{D}$ contains at most $320-3=317$ and $340-7*3=319$ points of $S$, respectively.

\bibliographystyle{abbrv}
\bibliography{Hop-Spanner}
\end{document}